\newlength{\dhatheight}
\newcommand{\dhat}[1]{%
    \settoheight{\dhatheight}{\ensuremath{\hat{#1}}}%
    \addtolength{\dhatheight}{-0.35ex}%
    \hat{\vphantom{\rule{1pt}{\dhatheight}}%
    \smash{\hat{#1}}}}
\newtheorem{thm}{Theorem}[section]
\newtheorem{cor}{Corollary}[section]
\newtheorem{exm}{Example}[section]
\newtheorem{dfn}{Definition}[section]
\newtheorem{rem}{Remark}[section]
\begin{document}

\makeatletter
\g@addto@macro{\endabstract}{\@setabstract}
\newcommand{\authorfootnotes}{\renewcommand\thefootnote{\@fnsymbol\c@footnote}}%
\makeatother

\begin{center}

\LARGE
  ABCDepth: efficient algorithm \\[.2em] for Tukey depth \par \bigskip

  \normalsize
  \authorfootnotes
  Milica Bogi\'cevi\'c\footnote{antomripmuk@yahoo.com} and Milan Merkle\footnote{emerkle@etf.rs} \par \bigskip

  {\em University of Belgrade, Faculty of Electrical Engineering, Bulevar Kralja Aleksandra 73, 11120 Belgrade, Serbia} \par \bigskip

\end{center}

\medskip

{\small
{\bf Abstract.} We present a new fast approximate algorithm for Tukey (halfspace) depth level sets and its implementation. Given a $d$-dimensional data set
for any $d\geq 2$, the algorithm is based on a representation of level sets as intersections of balls
in ${\bf R}^d$ (M. Merkle, J. Math. Anal. Appl. {\bf 370} (2010)). Our approach does not need calculations of projections of sample points to directions.
This novel idea enables calculations of level sets in very high dimensions with complexity which is linear in $d$, which provides a great advantage
over all other approximate algorithms. Using different versions of this algorithm we demonstrate approximate calculations of the deepest set of
points ("Tukey median"),
Tukey's depth of a sample point and of out-of-sample
point as well as approximate level sets that can be used for constructing depth contours, all with a linear in $d$ complexity. An additional theoretical advantage of this approach is that the
data points are not assumed to be
in "general position".  Examples with real and synthetic data show that the  executing time of the
 algorithm in all mentioned
versions in high dimensions is much smaller  than other implemented algorithms and that it can accept thousands of multidimensional observations.

Keywords: Big data, multivariate medians, depth functions, computing Tukey's depth.
 }

\medskip

%INTRODUCTION%
\section{Introduction}
\label{seci}

\medskip

A basic statistical task is to simplify a large amount of data using some values derived from the data set as representative points.
Among many ways to choose representative points, a  natural idea is to choose those that are located in the center of the data set.
One way to define a center is to define what is meant by deepness, and then to define the center as the set of deepest points.

 Although this paper is about multivariate medians and related notions, for completeness and understanding some ideas, we start from the univariate case.
Talking in terms of probability distributions, let $X$ be a random variable and let $\mu=\mu_X$ be the corresponding distribution, i.e., a probability measure
on $(\bf R,\mathcal B)$ so that $P(X\leq x)= \mu \{ (-\infty, x ]\}$.  For univariate case, a median of $X$  (or a median of $\mu_X$) is any number $m$ such that
$P(X\leq m)\geq 1/2$ and $P(X\geq m) \geq 1/2$.  In terms of data set, this property means that to reach any median point from outside of the data set,
we have to pass at least $1/2$ of data points, so this is the deepest point within the data set. With respect to this definition, we can define the depth of
any point $x\in \bf R$ as
\begin{equation}
\label{depd1}
D(x,\mu)=\min\{P(X\leq x), P(X\geq x)\} =\min \{\mu ((-\infty, x]), \mu ([x, +\infty))\}.
\end{equation}
The set of all median points $\{\rm Med \mu\}$ is  a non-empty compact interval (can be a singleton).
It can be shown that (see \cite{merkle05}, \cite{merkle10})
\begin{equation} \label{unimed}
\{\rm Med \mu\} =  \bigcap_{J=[a,b]:\ \mu (J) >1/2} J,
\end{equation}
 and (\ref{unimed}) can be taken for an alternative (equivalent) definition of univariate median set.  In ${\bf R}^d$ with $d>1$, there are quite a few
 different concepts of depth and medians (see  for example \cite{survey15}, \cite{small90}, \cite{zuoserf00}). In this paper we
 propose an algorithm for  halfspace depth (Tukey's depth, \cite{tukey75}), which is based on extension and generalization of (\ref{unimed}) to ${\bf R}^d$
 with balls in place of  intervals as in  \cite{merkle10}.

 The rest of the paper is organized as follows. Section 2 deals with a   theoretical background of the algorithm in a broad sense.
 In  Section 3 we present  approximate algorithm for finding Tukey median as well as  versions of the same algorithm for finding Tukey depth of a sample point,
the depth of  out-of-sample point,  and for  data contours. We also provide a derivation of complexity for each version of the algorithm and
present examples. Section 4 provides a comparison with several other algorithms in terms of  performances.

\medskip

%THEORY%
\section{Theoretical background: Depth functions based on families of convex sets}
\label{sect}

\medskip

\begin{dfn} \label{ddepthf} Let $\mathcal V$ be a family of convex sets in ${\bf R}^d$, $d\geq 1$, such that:  (i) $\mathcal V$ is closed under translations and (ii) for every ball $B\in {\bf R}^d$ there exists a set $V\in \mathcal V$ such that $B\in \mathcal V$.  Let $\mathcal U$ be the collection of complements of sets in $\mathcal V$.  For a given probability measure $\mu$ on ${\bf R}^d$, let us  define
\begin{equation}
 \label{depdg}
D_{\mathcal V}(x;\mu)= \inf \{\mu(U)\; |\; x\in U\in \mathcal U\}= 1-\sup\{\mu(V)\; |\; V\in \mathcal V,\ x\in V'\}
\end{equation}
The function $x\mapsto D(x;\mu,\mathcal V)$ will be called a depth function based on the family $\mathcal V$.
\end{dfn}

\begin{rem}{\rm Definition \ref{ddepthf} is a special case of Type $D$ depth functions as defined in \cite{zuoserf00} which can be obtained by generalizations of (\ref{depd1}) to higher dimensions.
The conditions stated in \cite{merkle10} that provide desirable behavior of the depth function, are satisfied in this special case, with additional requirements
that sets in $\mathcal V$ are closed or compact.}
\end{rem}

\medskip

\begin{exm}\label{Edfc} {\rm $1^{\circ}$ Let $\mathcal V$ be the family of all compact intervals $[a,b]\subset \bf R$. As shown in  \cite{merkle10}, the depth function
based on $\mathcal V$ is the same as the one defined by (\ref{depd1}).

$2^{\circ}$ With $d=2$, consider the family $\mathcal V$ of rectangles with sides parallel to coordinate axes. The corresponding depth function reaches
its maximum $D_{\max} \geq 1/2$ at coordinate-wise median. The same holds for $d>2$, with "boxes" whose sides are parallel to coordinate hyper-planes.

$3^{\circ}$ For $d>1$, let   $K$ be a closed convex cone in ${\bf R}^d$, with vertex at origin, and suppose that there exists a closed hyperplane $\pi$, such
that $\pi \cap K =\{0\}$ (that is, $K\setminus\{0\}$ is a subset of one of open halfspaces determined by $\pi$). Define a relation $\preceq$ by
$\boldsymbol x \preceq \boldsymbol y \iff \boldsymbol y -\boldsymbol x \in K$. Generalized intervals based on this partial order can be defined as
\[ [\boldsymbol a,\boldsymbol b]= \{ \boldsymbol x\; |\; \boldsymbol x-\boldsymbol a \in K \wedge \boldsymbol b-\boldsymbol x \in K\} = (\boldsymbol a +K) \cap (\boldsymbol b -K).\] Now let us take $\mathcal V$ to be a collection of all such
intervals with finite endpoints and define the depth by (\ref{depdg}). It can be shown (\cite[Section 3]{merkle10}) that the maximal depth is always
$\geq 1/2$, and the median set can be found using formula
(\ref{unimed}) with generalized intervals.

$4^{\circ}$  Let us consider a family $\mathcal V$ of all closed halfspaces in ${\bf R}^2$. Let $X$ be a random point in ${\bf R}^2$ with $P(X=A)=P(X=B)=P(X=C)=1/3$,
where $ABC$ is a non-degenerated triangle. Here all points inside and on the border of the triangle $ABC$ have the depth $1/3$ and the depth of other points
is equal to zero. Similar examples can be made for arbitrary dimension (see Example 4.1. in \cite{merkle10}).}

\end{exm}

From the above examples we see that
\begin{itemize}
\item A family $\mathcal V$ is not uniquely determined by the depth function; if we start with different collections $\mathcal V_1$ and $\mathcal V_2$,
the corresponding depth functions based on them can be the same (see \cite[Theorem 4.2]{merkle10} for a set of sufficient conditions).
\item The maximal depth in $d>1$ doesn't need to be $1/2$ as in the scalar case. The following general result
(\cite[Theorem 4.1]{merkle10}, see also particular case for Tukey's depth in \cite[Proposition 9]{rouru99}) says that the maximal depth has to be $1/(d+1)$ or bigger:
 \end{itemize}

\begin{thm} \label{alpha} Let $\mathcal V$ be any non-empty family of compact convex subsets of ${\bf R}^d$ satisfying the conditions as in Definition \ref{ddepthf}.
Then for any
probability measure $\mu$ on ${\bf R}^d$ there exists a point $x\in {\bf R}^d$ such that $D_{\mathcal V}(x;\mu)\geq \frac{1}{d+1}$.
\end{thm}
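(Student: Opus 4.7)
The plan is to reformulate the desired inequality as a non-emptiness statement for an intersection of compact convex sets and then invoke Helly's theorem; this mirrors the classical Rado proof of the centerpoint theorem, carried out abstractly for an arbitrary family $\mathcal V$. Using the second expression in (\ref{depdg}), the inequality $D_{\mathcal V}(x;\mu)\geq 1/(d+1)$ holds precisely when $\mu(V)\leq d/(d+1)$ for every $V\in\mathcal V$ with $x\notin V$, i.e.\ (contrapositively) when $x$ lies in every $V\in\mathcal V$ with $\mu(V)>d/(d+1)$. Setting $\mathcal F:=\{V\in\mathcal V\; |\; \mu(V)>d/(d+1)\}$, the theorem reduces to proving that $\bigcap_{V\in\mathcal F}V\neq\emptyset$, with the case $\mathcal F=\emptyset$ being trivial since then every point of ${\bf R}^d$ automatically has depth at least $1/(d+1)$.

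Next I would verify the Helly hypothesis for $\mathcal F$: any $d+1$ members $V_1,\ldots,V_{d+1}$ of $\mathcal F$ share a common point. This is immediate from a short measure bound. Each complement satisfies $\mu(V_i^c)<1/(d+1)$, so by subadditivity $\mu\bigl(\bigcup_{i=1}^{d+1}V_i^c\bigr)<1$, whence $\mu\bigl(\bigcap_{i=1}^{d+1}V_i\bigr)>0$ and in particular $\bigcap_{i=1}^{d+1}V_i\neq\emptyset$.

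Finally I would apply Helly's theorem in its standard form for families of compact convex subsets of ${\bf R}^d$: if every subfamily of cardinality $d+1$ has a common point, then the entire family has a common point. Applied to $\mathcal F$, whose members are compact and convex by hypothesis on $\mathcal V$, this yields a point $x^{*}\in\bigcap_{V\in\mathcal F}V$, which by the reduction above satisfies $D_{\mathcal V}(x^{*};\mu)\geq 1/(d+1)$, as required. I do not expect a substantial obstacle; the only delicate point is the passage from the finite Helly condition to non-emptiness of the (possibly infinite) intersection $\bigcap_{V\in\mathcal F}V$, which is where the compactness assumption built into the hypothesis on $\mathcal V$ is essential.
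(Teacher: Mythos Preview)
Your argument is correct. Note, however, that the paper does not actually prove this theorem in-text; it merely cites it as Theorem~4.1 of \cite{merkle10} (with the halfspace special case attributed to \cite[Proposition~9]{rouru99}). Your Helly-based argument is the classical Rado/centerpoint proof and is almost certainly what the cited reference does, since this is the canonical route to such a bound; the only ingredients are the subadditivity estimate you give and the infinite-family form of Helly's theorem, for which the compactness hypothesis on $\mathcal V$ is exactly what is needed.
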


The set of points with maximal depth is called {\em the center of distribution} and denoted  as $ C(\mu,\mathcal V)$.
In general, one can observe {\em level sets} (or {\em depth regions} or {\em depth-trimmed regions}) of level  $\alpha$ which are defined by
\begin{equation}
 \label{salpha}
 S_{\alpha}= S_{\alpha}(\mu,\mathcal V):=\{ x\in {\bf R}^d\; | \; D(x; \mu,\mathcal V)\geq \alpha \} .
\end{equation}

Clearly, if $\alpha_1<\alpha_2$ then $S_{\alpha_1} \supseteq S_{\alpha_2}$ and $S_{\alpha}=\emptyset$ for $\alpha>\alpha_{m}$, where $\alpha_m$ is
the maximal depth for given probability measure $\mu$.

The borders of depth level sets  are called {\em depth contours} (in two dimensions)  or {\em depth surfaces} in general. Let us note that the
all statistical inference based on multidimensional depths is performed using level sets and contours (see \cite{doga92,dughocha11,zhouserf08,rouru99}),
and that it is rarely necessary to find a depth of a particular point. On the other hand, in order to describe level sets and the center of
distribution we do not need to calculate depth functions, as the next result shows (\cite{zuoserf00} and Theorem 2.2. in \cite{merkle10}).

\begin{thm} \label{repls}
Let $D(x;\mu,\mathcal V)$ be defined for $x\in {\bf R}^d$ as in Definition \ref{ddepthf}. Then for any $\alpha\in (0,1]$
\begin{equation}
\label{lsvi}
S_{\alpha}(\mu,\mathcal V) =
\bigcap_{V\in \mathcal V, \mu(V)>1-\alpha} V.
\end{equation}
\end{thm}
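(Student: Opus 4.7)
The plan is to prove the set equality by a single iff chain built on the supremum form of the depth in Definition~\ref{ddepthf}, namely $D_{\mathcal V}(x;\mu) = 1 - \sup\{\mu(V)\mid V\in\mathcal V,\ x\notin V\}$. No compactness or convexity hypothesis is actually needed beyond what is built into the definition; the proof is essentially a contrapositive translation between ``$x$ is missing from some large-measure $V$'' and ``$x$ lies in every large-measure $V$.''

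First I would fix $\alpha \in (0,1]$ and $x \in {\bf R}^d$ and rewrite the defining condition $x\in S_\alpha$ as $D(x;\mu,\mathcal V)\geq\alpha$, which by the sup form becomes
\[
\sup\{\mu(V)\mid V\in\mathcal V,\ x\notin V\} \;\leq\; 1-\alpha.
\]
The key step is then to observe that this bound on the supremum is logically equivalent to the statement that \emph{every} $V \in \mathcal V$ with $x \notin V$ satisfies $\mu(V)\leq 1-\alpha$, which by contrapositive is the same as: every $V\in\mathcal V$ with $\mu(V) > 1-\alpha$ satisfies $x\in V$. This last condition is exactly $x\in\bigcap_{V\in\mathcal V,\ \mu(V)>1-\alpha} V$, yielding the claimed equality.

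The main thing I want to be careful about is the mismatch between the non-strict inequality that appears when one bounds a supremum ($\mu(V)\leq 1-\alpha$ for all $V$ with $x \notin V$) and the strict inequality $\mu(V) > 1 - \alpha$ used to index the intersection on the right of \eqref{lsvi}. The contrapositive aligns these correctly: ``$\mu(V)\leq 1-\alpha$ whenever $x\notin V$'' says precisely the same thing as ``$x\in V$ whenever $\mu(V) > 1-\alpha$,'' and in particular one does not need a supremum to be attained. This is the only potential pitfall, and once it is laid out the proof is a two-line equivalence; so I would write it as a single chain of ``iff'' rather than separating into two inclusions.
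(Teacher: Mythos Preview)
Your argument is correct: the iff chain from $D(x;\mu,\mathcal V)\ge\alpha$ to $\sup\{\mu(V):x\notin V\}\le 1-\alpha$ to the contrapositive ``$\mu(V)>1-\alpha\Rightarrow x\in V$'' is exactly what is needed, and you are right that the strict/non-strict mismatch is resolved cleanly by contraposition with no attainment assumption. The paper does not actually supply its own proof of this theorem---it simply cites \cite{zuoserf00} and \cite[Theorem~2.2]{merkle10}---so there is nothing to compare against; your write-up would serve perfectly well as the missing argument.
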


The center of a distribution is then the smallest non-empty level set; equivalently,
\begin{equation}
\label{cenf}
 C(\mu,\mathcal V)=\bigcap_{\alpha:S_\alpha \neq \emptyset} S_{\alpha} (\mu,\mathcal V)
\end{equation}

Since sets in $\mathcal V$ are convex, the level sets are also convex.

From (\ref{salpha}) and (\ref{lsvi}) we can see that the depth function can be uniquely reconstructed starting
from level sets.

\begin{cor} \label{deviale}For given $\mu$ and $\mathcal V$, let $S_{\alpha}$, $\alpha\geq 0$ be defined as in (\ref{lsvi}),
with $S_{\alpha} =\emptyset$ for $\alpha >1$.  Then the function $D: {\bf R}^d \mapsto [0,1]$
defined by
\begin{equation}
\label{deplf}
D(x) = h\quad \iff \quad x\in S_{\alpha}, \alpha \leq h \quad {\rm and}\quad x\not\in S_{\alpha}, \alpha >h
\end{equation}
is the unique depth function  such that (\ref{salpha}) holds.
\end{cor}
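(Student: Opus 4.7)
The plan is to verify in sequence that the function $D$ defined by (\ref{deplf}) is well-defined for every $x$, that it satisfies (\ref{salpha}), and that it is the only function with this property.

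First I would establish well-definedness. For a fixed $x\in{\bf R}^d$, set $A_x=\{\alpha\geq 0 : x\in S_\alpha\}$. Since $S_0={\bf R}^d$ (the intersection in (\ref{lsvi}) is empty when $\alpha=0$, because no $V$ satisfies $\mu(V)>1$) and $S_\alpha=\emptyset$ for $\alpha>1$, the set $A_x$ is nonempty and bounded above by $1$, so $h:=\sup A_x\in[0,1]$. The monotonicity $\alpha_1<\alpha_2\Rightarrow S_{\alpha_1}\supseteq S_{\alpha_2}$ (observed already in the excerpt) implies $A_x=[0,h]$ or $A_x=[0,h)$. The key subtlety is to show that $h\in A_x$, i.e.\ $x\in S_h$. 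Using the concrete form (\ref{lsvi}), pick any $V\in\mathcal V$ with $\mu(V)>1-h$; then $\mu(V)>1-\alpha$ for some $\alpha<h$, and since $x\in S_\alpha\subseteq V$ we get $x\in V$. Intersecting over all such $V$ gives $x\in S_h$, so $A_x=[0,h]$ and the condition in (\ref{deplf}) characterizes the unique value $h=D(x)$.

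Next I would verify (\ref{salpha}) for this $D$. If $D(x)\geq\alpha$, then by the definition (\ref{deplf}) we have $x\in S_\beta$ for every $\beta\leq D(x)$, so in particular $x\in S_\alpha$. Conversely, if $x\in S_\alpha$, then $\alpha\in A_x$, hence $\alpha\leq h=D(x)$. Thus $S_\alpha=\{x: D(x)\geq\alpha\}$, which is exactly (\ref{salpha}).

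Finally I would address uniqueness. Suppose $\widetilde D:{\bf R}^d\to[0,1]$ also satisfies $S_\alpha=\{x:\widetilde D(x)\geq\alpha\}$ for every $\alpha\geq 0$. Fix $x$ and set $h=D(x)$. Because $x\in S_h$ (attainedness from step one), the assumption on $\widetilde D$ gives $\widetilde D(x)\geq h$. In the other direction, for every $\alpha>h$ we have $x\not\in S_\alpha$, hence $\widetilde D(x)<\alpha$; letting $\alpha\downarrow h$ yields $\widetilde D(x)\leq h$. Therefore $\widetilde D(x)=D(x)$ pointwise, proving uniqueness.

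The only step with any real content is the attainedness of $\sup A_x$; this is where the specific intersection-of-$V$'s structure of (\ref{lsvi}) is used, rather than general abstract nonsense, and without it the characterization (\ref{deplf}) could fail to determine a single value. Everything else is a bookkeeping consequence of monotonicity of $S_\alpha$ in $\alpha$.
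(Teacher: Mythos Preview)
Your argument is correct and is exactly the verification the paper leaves implicit: the corollary is stated without proof, as an immediate consequence of (\ref{salpha}) and (\ref{lsvi}), and you have simply written out that consequence carefully. The one point you rightly flag as having content---that $h=\sup A_x$ is actually attained, i.e.\ $x\in S_h$---is precisely what makes the biconditional in (\ref{deplf}) pick out a single value, and your derivation of it from the intersection form (\ref{lsvi}) is the intended mechanism.
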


The algorithm that we propose in this paper primary finds  level sets based on the formula (\ref{lsvi}), rather then directly the depth of particular points.
The depth of a single point, if needed, can be calculated via Corollary \ref{deviale}. The algorithm will be demonstrated in the case of half-space depth,
which is described in the next section.

\medskip

%ABCDEPTH%
\section{ABCDepth Algorithm for Tukey depth: Implementation and the Output}
\label{seca}

\medskip
The most popular choice among depth functions of Definition \ref{ddepthf} is the one which is based on half-spaces,
also called Tukey's depth \cite{tukey75}. Here $\mathcal V$ is the family of all open half-spaces, and the complements are
closed half-spaces, so the usual definition of Tukey depth is obtained from (\ref{depdg}) as

\begin{equation}
\label{odtd}
 D(x;\mu)=  \inf \{\mu(H)\; |\; x\in H\in \mathcal H\},
\end{equation}

where $\mathcal H$ is the family of all closed halfspaces. In this section, we consider only half-space depth, so we use the notation $D(x,\mu)$ instead of
$D_{\mathcal V}(x,\mu)$.
As  already noticed in Section \ref{sect}, a depth function can be defined based on different families $\mathcal V$.
We say that families $\mathcal V_1$ and $\mathcal V_2$ are
depth-equivalent if $D_{\mathcal V_1}(x;\mu)=D_{\mathcal V_2}(x;\mu)$ for all $x\in {\bf R}^d$ and all probability measures $\mu$.
Sufficient conditions for depth-equivalence are given in  \cite[Theorem 2.1]{merkle10}, and it was shown there that in the case of
half-space depth the following families are depth-equivalent:
a) Family of all open halfspaces; b) all closed  halfspaces; c) all convex sets; d) all compact convex sets; e) all closed or open balls.

For determining level sets we choose closed balls,  and so we can define $\mathcal V$ as a set of all closed balls (hyper-spheres)
and (exact) level sets can be found as
\begin{equation}
\label{sabs}
S_{\alpha}(\mu,\mathcal V) =
\bigcap_{B\in \mathcal V, \mu(B)>1-\alpha} B.
\end{equation}

From now on we consider only the case when the underlying probability measure $\mu$ is derived from a given data set.

\subsection{The sample version}\label{saver}
In the setup with data sets,  we have a sample of $n$ points $\{x_1,\ldots, x_n\}$ (with repetitions allowed) and we may
use the counting measure defined as
\begin{equation}
\label{cmes}
 \mu (A)= \frac{ \# \{ x_i: \ x_i \in A\}}{n}.
\end{equation}
In this part we assume to have a fixed sample of $n$ points, so we don't need to explicitly acknowledge the dependence of sample and its cardinality.
The level sets in (\ref{sabs}) for $\alpha\in (0,1]$ can be found as

\begin{equation}
\label{sabsds}
S_{\alpha} = \bigcap_{B\in \mathcal V, \#\{x_i:\ x_i \in B \} \geq \lfloor n(1 - \alpha) + 1 \rfloor} B,
\end{equation}
where we write $S_{\alpha}$ instead of $S_{\alpha}(\mu,\mathcal V)$, assuming that $\mathcal V$ is the collection of all closed balls and $\mu$ is defined
as in (\ref{cmes}).  In a practical realization, we start with a finite collection   $\mathcal V_N\subset \mathcal V$, of $N$ balls $B_i$, $i=1,\ldots, N$
 which contain at least $\lfloor n(1 - \alpha) + 1 \rfloor $ points. In is natural to assume that if we want more than  $N$ balls, then we just add
new ones to the collection $\mathcal V_N$, i.e,
\begin{equation}
 \label{NN}
 \mathcal V_{N_1}\subset \mathcal V_{N_2}\qquad \mbox{for}\ N_1<N_2.
\end{equation}

Now for fixed $N$ and $\mathcal V_N$, we intersect balls  $B_i \in \mathcal V_N$ one by one, so that the $k$-the step we have
the approximate level set

\begin{equation}
\label{alevs}
\hat{S}_{\alpha, k} := \cap_{i=1}^{k}B_i, 1\leq k\leq N.
\end{equation}

From the assumption (\ref{NN}) it follows that

\begin{equation}
\label{aproxN}
\hat{S}_{\alpha,N} \supseteq \hat{S}_{\alpha,N+1} \supseteq \cdots \supseteq S_{\alpha}.
\end{equation}

In general, for a given $\alpha$  it can happen that
$\hat{S}_{\alpha, k}=\emptyset$. Since it is computationally
hard problem to determine whether or not  the set
(\ref{alevs}) is empty, and also for the purpose of
visualization, we  need to have some points inside the balls to
decide if they belong to $\hat{S}_{\alpha, k}$ or not. Let
$C_M$ be a discrete set of points in ${\bf R}^d$, $|C_M|=M\geq N$,
such that (i) each ball $B_i$, $i=1,\ldots N$ of (\ref{alevs})
contains at least one $c\in C$ and (ii) each point $c\in C$
belongs to at least one ball $B_i$. It is natural to assume
that sets $C_M$ are increasing with $M$, that is,
\begin{equation}
\label{ranptn}
 M_1 < M_2 \implies C_{M_1} \subset C_{M_2}
\end{equation}
Let us define
\begin{equation} \label{setlevp}
\dhat{S}_{\alpha, k, M} := \{ c\in C_M \; | \; c\in \hat{S}_{\alpha, k} \},\qquad \dhat{S}_{\alpha, N, M}= \bigcap_{k=1}^N \dhat{S}_{\alpha, k, M}.
\end{equation}

For a fixed $N$  and every $M\geq N$ we have that

\begin{equation} \label{relbetwd}
\dhat{S}_{\alpha,N,M}\subseteq \dhat{S}_{\alpha,N,M+1}  \subset \hat{S}_{\alpha,N},
\end{equation}
and also for every $k<N$

\begin{equation} \label{relbetdwk}
\dhat{S}_{\alpha,k,M}\supseteq \dhat{S}_{\alpha,k+1,M}.
\end{equation}

 The process of finding $\dhat{S}_{\alpha,N,M} $ as defined by (\ref{setlevp}) can end at the step $K\leq N$ in two ways:
 (i) if  $\dhat{S}_{\alpha, K, M}=\emptyset$,
 (ii) if $K=N$ or  $\dhat{S}_{\alpha, k, M}$ remains the same non-empty set for all $k$ such that
  $K\leq k \leq N$. In the
case (i) we conclude that $\dhat{S}_{\alpha, N, M}= \emptyset$. In the case (ii) we  have that $\dhat{S}_{\alpha, N, M} = \dhat{S}_{\alpha, K,M}$ and
we accept $\dhat{S}_{\alpha,K,M}$  as an approximation to $S_{\alpha}$ defined
by (\ref{sabsds}).

\begin{rem}  {\rm
$1^{\circ}$ The output of the described procedure is $\dhat{S}_{\alpha}$. In order to make a  contour we can find a convex hull of $\dhat{S}_{\alpha}$
using \textit{QuickHull} algorithm, for example. The relations (\ref{relbetwd}) remain true if $\dhat{S}_{\alpha}$ is replaced with its convex hull.

$2^{\circ}$ Computational experiments indicate that the convex hull of $\dhat{S}_{\alpha, M,N}$ converges to exact $S_{\alpha}$ as $M,N \rightarrow \infty$.
 The proof of that statement would follow from (\ref{aproxN}), (\ref{setlevp}) and (\ref{relbetwd}) under some additional assumptions, which
 will not be further
 elaborated   in this paper.

  $3^{\circ}$ The simplest way to implement the above procedure is to take $M=N=n$, and $C=\{x_1,\ldots, x_n\}$ where
$x_i $ are sample points and intersectional balls are centered at $x_i$. In some cases  $M=N> n$ is needed. In the rest of the
paper we consider only the case $M=N$. For simplicity, in the rest of the paper we will use notation $S_{\alpha}$ in the meaning of $\dhat{S}_{\alpha}$
unless explicitly noted otherwise.
}
\qed
\end{rem}

\subsection{Sample augmented with artificial data points} \label{sectartd}
If we have a large and dense sample of the size $n$, we can take $N=n$ and use balls centered in sample points.
The basic Algorithm 1 of the subsection \ref{subsectm} is presented in that setup.
However, setting $N=n$  may not be sufficient in estimation of level sets and especially
the deepest points (Tukey's median).
As an example, consider a uniform distribution in the region bounded by circles  $x^2+y^2=r^2_i$,
$r_1=1$ and $r_2=2$. It is easy to prove (see also \cite{dughocha11}) that the depth monotonically increases from $0$ outside of the
larger circle, to $1/2$ at the origin, which is the true and unique median.
With a sample from this distribution, we will not have data points inside the inner circle, and we can not identify the median in the way proposed above.

In similar cases and whenever we have sparse data or small sample size $n$, we can still visually identify depth regions and center,
simply by adding artificial points to the data set.
Let the data set contain points $x_1, \ldots, x_n$ and let $x_{n+1},\ldots, x_{N}$ be points chosen from uniform distribution in
some convex domain that contains the whole data set. Then we use a modification of the described procedure in such a way that \label{artdata}
we use augmented data set (all $N$ points) as a criterium for stopping (in cases (i) and (ii) above), but
 $n$ in formulas (\ref{cmes}) and (\ref{sabsds}) is the cardinality of original data set.

Figure \ref{ring} shows the output of ABCDepth algorithm in the
example  described above. By adding artificial data points, we are
able to obtain an approximate
position of the Tukey's  median.

\begin{figure}[!htb]
    \centering
 %\minipage{0.7\textwidth}
    \includegraphics[width=0.5\textwidth, height=0.4\textheight]{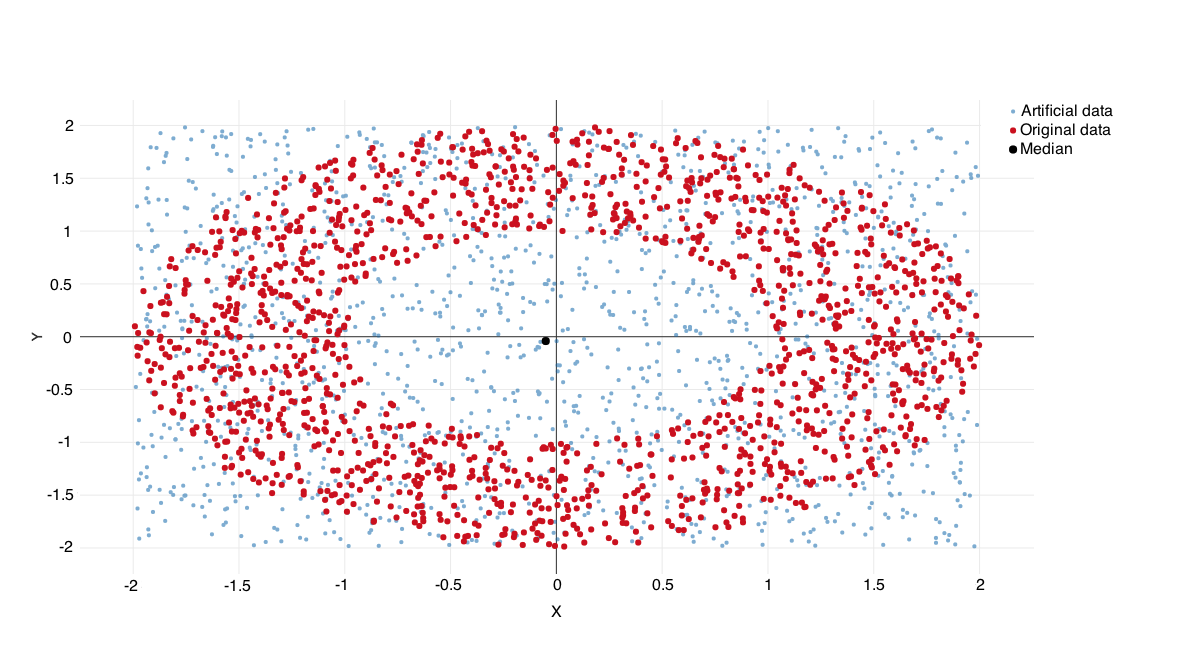}
    \caption{A sample from uniform distribution in a ring (red): Tukey's median (black) found with the aid of artificial points (blue).  }
    \label{ring}
    %\endminipage\hfill
\end{figure}

Let us consider a triangle  as in Example \ref{Edfc}-$4^{\circ}$ of Section \ref{sect} with vertices $A(0, 1)$, $B(-1, 0)$ and $C(1, 0)$.
Assuming that $A,B,C$ are sample points, all points in the interior and on the border of $ABC$ triangle have depth $1/3$,
so the depth reaches its maximum value at $1/3$.
Since the original data set contains only $3$ points, by adding artificial data and applying ABCDepth algorithm we can
visualize  the Tukey's median set as shown in Figure \ref{triangle}.

\begin{figure}[!htb]
    \centering
 %\minipage{0.7\textwidth}
    \includegraphics[width=0.5\textwidth, height=0.4\textheight]{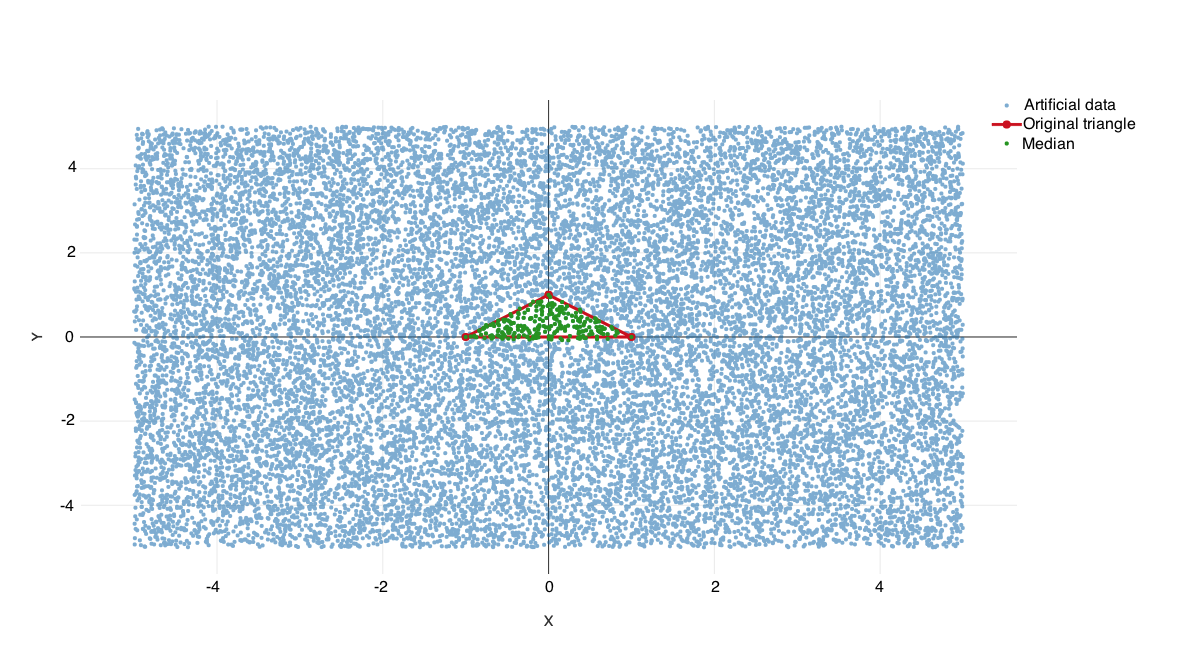}
    \caption{Tukey's median set of red triangle represented as triangle itself and green points inside of the triangle. Blue points
   are artificial data. }
    \label{triangle}
    %\endminipage\hfill
\end{figure}

%Although half-space  depth is the only affine invariant among examples in Section 3, other options can be used in conjunction with the
%data-driven coordinate system \cite{chakchaud96} and using the geometry of data sets.

In the rest of this section, we  describe the details of implementation of the approximate algorithm for finding Tukey's median,
as well as  versions of the same algorithm for finding Tukey depth of a sample point,
the depth of  out-of-sample point,  and for  data contours.

\subsection{Implementation: finding deepest points (Tukey's median)}
\label{subsectm}
In order to execute the calculation in (\ref{sabsds}), the first step is to construct balls for the intersection.
Each ball is defined by its center and contains $\lfloor n(1 - \alpha) + 1 \rfloor$ nearest points, so first we calculate Euclidian inter-distances.
This part of the implementation is described in lines $1-6$ of Algorithm 1.
Distances are stored as a triangular matrix in a \textit{list of lists} structure, where $i$-th list $(i = 1, . . . , n-1)$ contains distances
$d_{i+1,j} , j = 1,\ldots  , i$.
After sorting distances for each point, the structure that contains all $n$ balls is populated (lines 7-10, Algorithm 1).
The structure is represented as  a \textit{hashmap}, where the \textit{key} is a center of a ball, and \textit{value} is a \textit{list} with
$\lfloor n(1 - \alpha) + 1 \rfloor$ nearest points. Now, we intersect balls iteratively by increasing $\alpha$ by $\frac{1}{n}$.

Since this algorithm is meant to  find the deepest location,
 there is no need to start with the minimal value of $\alpha = \frac{1}{n}$; due to Theorem \ref{alpha}, we set the initial value of  $\alpha$ to be
 $\frac{1}{d+1}$.  Balls intersections are shown on Algorithm 1, lines 11-17.

%\reversemarginpar

If the input set is sparse  ABCDepth optionally creates an augmented data set of total size $N$ as explained on
page \pageref{artdata} and demonstrated on figures \ref{ring} and  \ref{triangle}). Let $R_1=\{x_1,\ldots, x_n\}$ be the original data set
and let $R_2=\{x_{n+1},\ldots , x_N\}$ be the set of "artificial points".
 The algorithm creates balls with centers  in $R_1\cup R_2$  that contain
 $\lfloor n(1 - \alpha) + 1 \rfloor$ points from  $R_1$. The rest of the algorithm takes three phases we described above.

\begin{algorithm}[H]
  \LinesNumbered
  \SetAlgoLined
  \KwData{Original data, $X_n = (\boldsymbol{x_1}, \boldsymbol{x_1},...,\boldsymbol{x_n}) \in \mathbb{R}^{d \times n}$}
  \KwResult{List of level sets, $S = \{S_{\alpha_1}, S_{\alpha_2},...,S_{\alpha_m}\}$, where $S_{\alpha_m}$ represents a Tukey median}
  \tcc{Note: $S_{\alpha}$ here means $\dhat{S}_{\alpha}$}
  \BlankLine
  % \tcc {Calculate $\frac{n(n-1)}{2}$ inter-distsances}
  \For{$i \gets 2$ \textbf{to} $n$} {
	\For{$j \gets 1$ \textbf{to} $i-1$} {
	 	%\If { $j \neq i$} {
		 	Calculate Euclidian distance between point $\boldsymbol{x_i}$ and point $\boldsymbol{x_j}$ \;
			Add distance to the list of lists \;
		
	}
  }
   \BlankLine
   \For{$i \gets 1$ \textbf{to} $n$} {
   	Sort distances for point $\boldsymbol{x_i}$  \;
	Populate structure with balls \;
   }
   \BlankLine
   \tcc{Iteration Phase}
   $size = n$, $\alpha_1=\frac{1}{d+1}$, $k=1$ \;
   \While{$size > 1$} {
   	$S_{\alpha_k} = \{ \bigcap_{j}^{n} B_j,  \left | B_j \right | = \lfloor n(1 - \alpha_k) + 1 \rfloor \}$ \;
	$size = \left | S_{\alpha_k} \right |$ \;
	$\alpha_{k+1}=\alpha_k + \frac{1}{n}$ \;
	Add $S_{\alpha_k}$ to $S$ \;
	$k = k + 1$ \;
   }
  \caption{Calculating Tukey median.}
\end{algorithm}
%}

The initial version of ABCDepth algorithm was presented in  \cite{bome15}.

\subsubsection{Complexity}
\begin{thm}
\label{complexitytm}
ABCDepth algorithm for finding approximate Tukey median has order of $O((d + k)n^2 + n^2\log{n})$ time complexity, where $k$
is the number of iterations in the iteration phase.
\end{thm}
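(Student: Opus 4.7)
The plan is to bound the running time of each of the three phases of Algorithm 1 in turn and then sum them. First I would analyze the distance-computation block (lines 1--6): there are $\binom{n}{2}$ unordered pairs, each Euclidean distance in ${\bf R}^d$ costs $O(d)$ operations, and appending to the triangular list-of-lists is $O(1)$ per pair, so this phase is $O(dn^2)$. Next comes the sorting and ball-construction block (lines 7--10): for each of the $n$ centers $x_i$, sorting the $n-1$ distances $\{d(x_i,x_j):j\neq i\}$ by a comparison sort costs $O(n\log n)$, and reading off the first $\lfloor n(1-\alpha)+1\rfloor$ entries to fill the hashmap adds $O(n)$ per center. The total for this phase is $O(n^2\log n)$.

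The iteration phase (lines 11--17) needs more care. At step $k$ the approximate level set $\dhat{S}_{\alpha_k}$ is represented by those sample points lying in every ball $B_j$. For each of the $n$ candidate points $c$ and each of the $n$ balls $B_j$, membership $c\in B_j$ reduces to the comparison $d(c,x_j)\le r_j(\alpha_k)$, where $r_j(\alpha_k)$ is the distance from $x_j$ to its $\lfloor n(1-\alpha_k)+1\rfloor$-th nearest neighbor. Because Phase~1 stored every pairwise distance and Phase~2 sorted them, both $d(c,x_j)$ and the threshold $r_j(\alpha_k)$ are obtainable by $O(1)$ direct-index lookups. A single iteration therefore costs $O(n^2)$, and $k$ iterations together cost $O(kn^2)$. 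Summing the three contributions gives $O(dn^2)+O(n^2\log n)+O(kn^2)=O((d+k)n^2+n^2\log n)$, as claimed.

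The main obstacle is justifying the $O(1)$ per membership test in the iteration phase; this rests on two implementation invariants: (i) the candidate set coincides with the original sample (the $M=N=n$ case to which the theorem is stated), so every distance needed during intersection tests was already tabulated in Phase~1, and (ii) the radius threshold $r_j(\alpha_k)$ is located in the sorted list by direct indexing rather than by search, which is possible because $\lfloor n(1-\alpha_k)+1\rfloor$ is simply an integer position. Without these invariants --- for instance, if artificial points were added as in subsection~\ref{sectartd} --- additional $O(d)$ distance computations would be required when new centers are introduced, but the overall form $O((d+k)n^2+n^2\log n)$ would be preserved.
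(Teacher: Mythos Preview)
Your proof is correct and follows the same three-phase decomposition as the paper, obtaining the identical bounds $O(dn^2)$, $O(n^2\log n)$, and $O(kn^2)$ for the distance, sorting, and iteration phases respectively. The only minor difference is in how you justify the $O(n^2)$ cost per iteration: the paper argues that intersecting $n$ balls costs $O(n)$ per pairwise intersection by appealing to properties of the hash-based data structure storing each ball's contents, whereas you argue via an $O(1)$ membership test (distance lookup against a sorted-position threshold) for each of the $n\times n$ point--ball pairs; both arguments are valid and yield the same bound.
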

\begin{proof}
To prove this theorem we use the pseudocode of  Algorithm 1.
Lines 1-6 calculate Euclidian inter-distances of points. The first \textit{for} loop (line 1) takes all $n$ points, so its complexity is $O(n)$.
Since there is no need calculate $d(x_i, x_i)$ or to calculate $d(x_j, x_i)$ if it is already calculated, the second for loop (line 2) runs in
 $O(\frac{n-1}{2})$ time. Finally, calculation of Euclidian distance takes $O(d)$ time. The overall complexity for lines 1-6 is:
\begin{equation}
\label{one}
O(\frac{nd(n-1)}{2}) \sim O(dn^2)
\end{equation}

Iterating through the \textit{list of lists} obtained in lines 1-6, the first \textit{for} loop (line 7) runs in $O(n)$ time.
For sorting the distances per each point, we use \textit{quicksort} algorithm that takes $O(n \log n)$ comparisons to sort $n$ points
\cite{hoare61}. Structure populating takes $O(1)$ time. Hence, this part of the algorithm has complexity of:
\begin{equation}
\label{two}
O(n^2 \log n).
\end{equation}

In the last phase (lines 11-18), algorithm calculates level sets by intersecting balls constructed in the previous steps.
In every iteration (line 12), all $n$ balls that contain $\lfloor n(1 - \alpha_k) + 1 \rfloor$ are intersected (line 13).
The parameter $k$ can be considered as a number of iterations, i.e. it counts how many times the algorithm enters in \textit{while} loop.
Each intersection has the complexity of $O(\lfloor n(1 - \alpha_k) + 1 \rfloor) \sim O(n)$, due to the property of the hash-based data
structure we use (see for example \cite{fastset}). We can conclude that the iteration phase has complexity of:
\begin{equation}
\label{three}
O(kn^2).
\end{equation}

From (\ref{one}), (\ref{two}) and (\ref{three}),
\begin{equation}
\label{alg1}
O(dn^2) + O(n^2 \log n) + O(kn^2) \sim O((d + k)n^2 + n^2 \log n),
\end{equation}
which ends the proof.

\end{proof}

\begin{rem}{\rm \label{kmaxgp} $1^{\circ}$  From the relations between $S_{\alpha }$, $\hat{S}_{\alpha}$
and $\dhat{S}_{\alpha}$ (in notations as in \ref{saver}, page \pageref{saver}), it follows that the maximal approximative depth of for a given point
can not be greater its than its exact  depth.

$2^{\circ}$ Under the assumption that data points are in the general position, the exact sample maximal depth is $\alpha_m=\frac{m}{n}$, where $m$ is
not greater than $\lceil\frac{n}{2}\rceil$ (see \cite[Proposition 2.3]{doga92}), and so by remark $1^{\circ}$
the number $k$ of steps satisfies the inequality

\begin{equation}
\frac{k-1}{n}\leq  \frac{n+1}{2n} - \frac{1}{d+1},
\end{equation}
and  the asymptotical upper bound for $k$ is $\frac{n}{2}$.}

\end{rem}

\begin{rem} \label{augmtm}{\rm  In the case when we add artificial data points to the original data set,
$n$ in (\ref{alg1}) should be replaced with $N$, where $N$ is the cardinality of the augmented data set.
The upper bound for $k$ in (\ref{alg1}) remains the same. }
\end{rem}

The rates of complexity with respect to $n$ and $d$ of Theorem \ref{complexitytm} are confirmed by simulation results presented in figures
\ref{fig:data_time2} and  \ref{fig:dim_time1}. Measurements are taken on simulated samples of size $n$ from $d$-dimensional
with expectations zero and uncorrelated marginals, with $d \in \{2,...,10\}$ and
\[ n \in \{ 40,80,160,320,640,1280,2560,3000,3500,4000,4500,5000,5500,6000,6500,7000 \}\]
 The results are averaged on $10$ repetitions for each fixed pair $(d,n)$.

\begin{figure}[!htb]
    \centering

 %\minipage{0.7\textwidth}
    \includegraphics[width=0.7\textwidth]{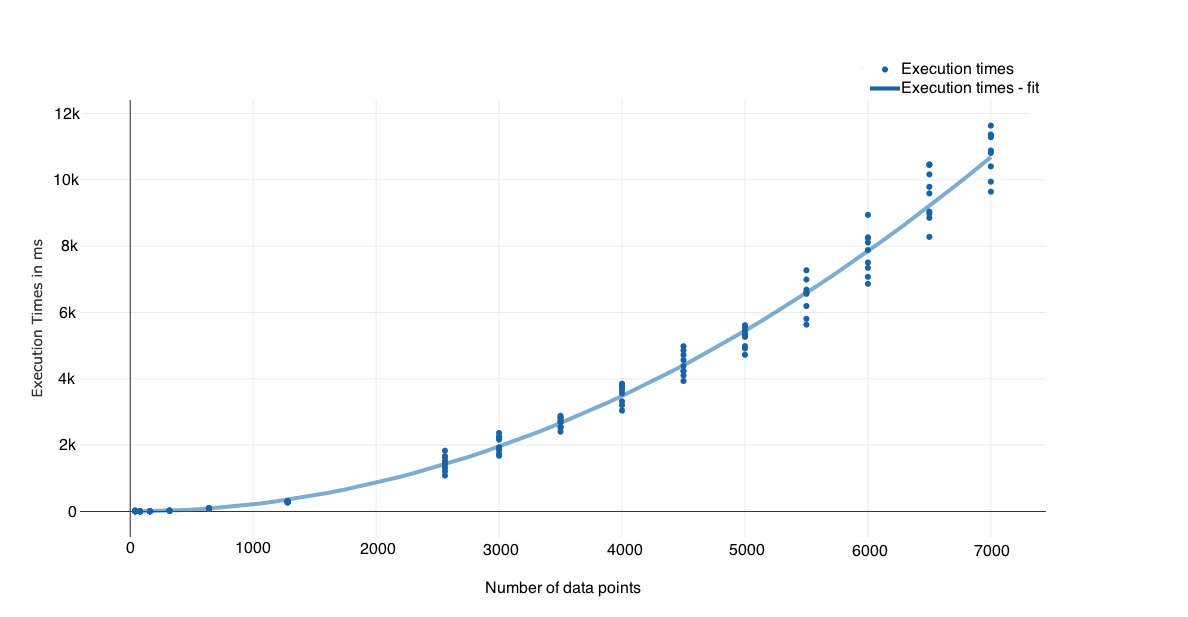}
    \caption{When number  of points increases the execution time grows with the order of $n^2\log n$.}
        \label{fig:data_time2}
      % \endminipage\hfill
\end{figure}

\begin{figure}[!htb]
    \centering
 %\minipage{0.7\textwidth}
    \includegraphics[width=0.7\textwidth]{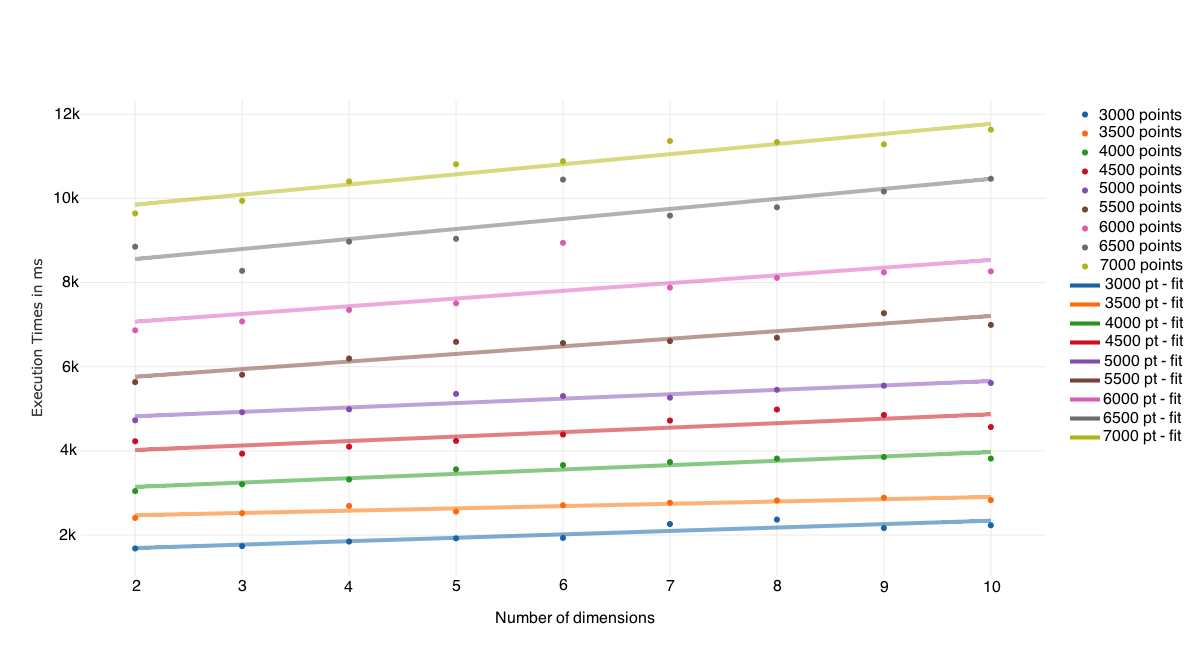}
    \caption{The execution time grows linearly with dimensionality. }
    \label{fig:dim_time1}
    %\endminipage\hfill
\end{figure}

\subsubsection{Examples}
\label{secexamplestm}

%\reversemarginpar

Our first example is really simple and it considers $n$ points in dimension $1$ generated from normal $\mathcal N(0,1)$ distribution.
By running ABCDepth in this case with $n=1000$, we get two points (as expected) in the median level set,
$S_{\alpha_0.5}=\{-0.00314, 0.00034\}$. With another sample with  $n=1001$ (odd number)  from
 the same distribution, the median  set is a singleton, $S_{\alpha_0.5} = \{0.0043\}$

Now, we demonstrate data sets generated from bivariate and multivariate normal distribution.

Figure \ref{1000_2}  and Figure \ref{1000_3} show the median calculated from $1000$ points in dimension $2$ and $3$, respectively from normal $\mathcal N(0,1)$ distribution. Starting from $\alpha=\frac{1}{d+1}$ the algorithm produces $\sim 200$ levels sets for $d=2$ and $\sim 300$ level sets for $d=3$, so not all of them are plotted. On both figures the median is represented as a black point with depth $\frac{499}{1000}$ on Figure \ref{1000_2}, i.e. $\frac{493}{1000}$ on Figure \ref{1000_3}.
\begin{figure}[!htb]
    \centering
 %\minipage{0.7\textwidth}
    \includegraphics[width=0.7\textwidth]{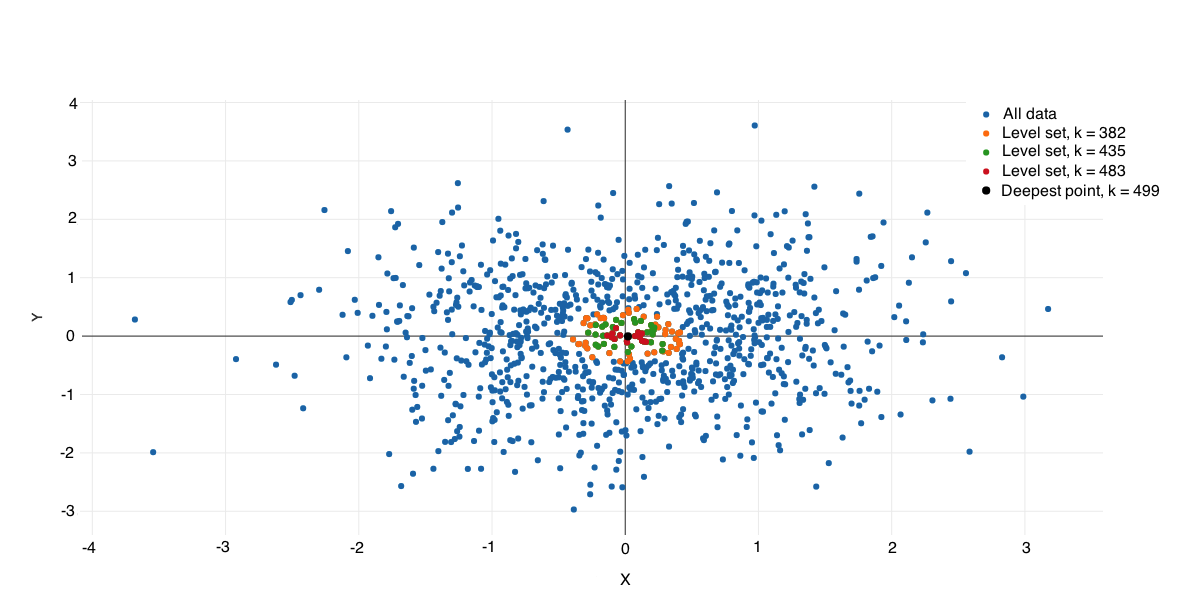}
    \caption{Bivariate normal distribution - four level sets, where the black point at the center is the deepest point.}
    \label{1000_2}
    %\endminipage\hfill
\end{figure}

 \begin{figure}[!htb]
    \centering
 %\minipage{0.7\textwidth}
    \includegraphics[width=0.7\textwidth]{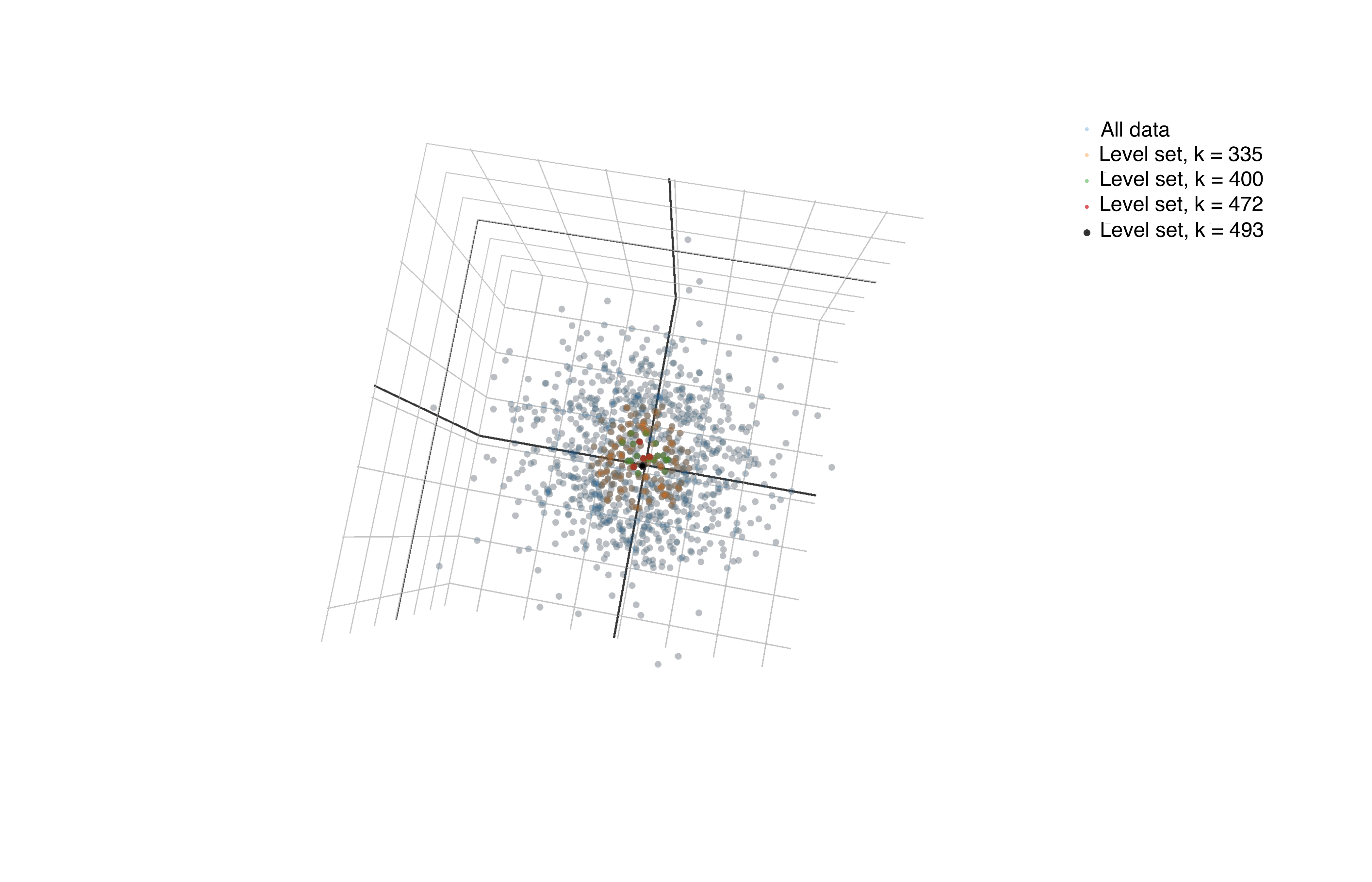}
    \caption{3D normal distribution - four level sets, where the black point at the center is the deepest point.}
    \label{1000_3}
    %\endminipage\hfill
\end{figure}

All data generators that we use in this paper in order to verify and plot the algorithm output were presented at
\cite{bome14} and they are available within  an open source project at \url{https://bitbucket.org/antomripmuk/generators}.

As a real data example, we take a data set which is rather sparse. The data set is taken from \cite{struro00}, and it has been used in several
other papers as a benchmark.  It contains 23 four-dimensional observations in period from 1966 to 1967 that represent seasonally adjusted
changes in auto thefts in New York city. For the sake of clarity, we take only two dimensions:
percent changes in manpower, and seasonally adjusted changes in auto thefts.
The data is downloaded from \url{http://lib.stat.cmu.edu/DASL/Datafiles/nycrimedat.html}.
Figure \ref{single_ny} shows the output of ABCDepth algorithm if we consider only points from the sample (orange point).
Obviously, the approximate median belongs to the original data set.
Then, we run ABCDepth algorithm  with  $1000$ artificial data points from the uniform distribution as  explained in
Section \ref{sect} and earlier in this section.
The approximate median obtained by this run (green point) has the same depth of $\frac{9}{23}$ as the median
calculated using DEEPLOC algorithm \cite{struro00} by running their Fortran code (red point). We check depths of those two points
(green and red) applying \textit{depth} function based on \cite{struyf98} and implemented in R "depth" package \cite{depthr}.
Evidently, the median, in this case, is not a singleton, i.e. there is more than one point with depth $\frac{9}{23}$.
By adding more than $1000$ artificial points, we can get more than one median point. We will discuss this example again in Section \ref{subsecdc}.

 \begin{figure}[!htb]
    \centering
 %\minipage{0.7\textwidth}
    \includegraphics[width=0.7\textwidth]{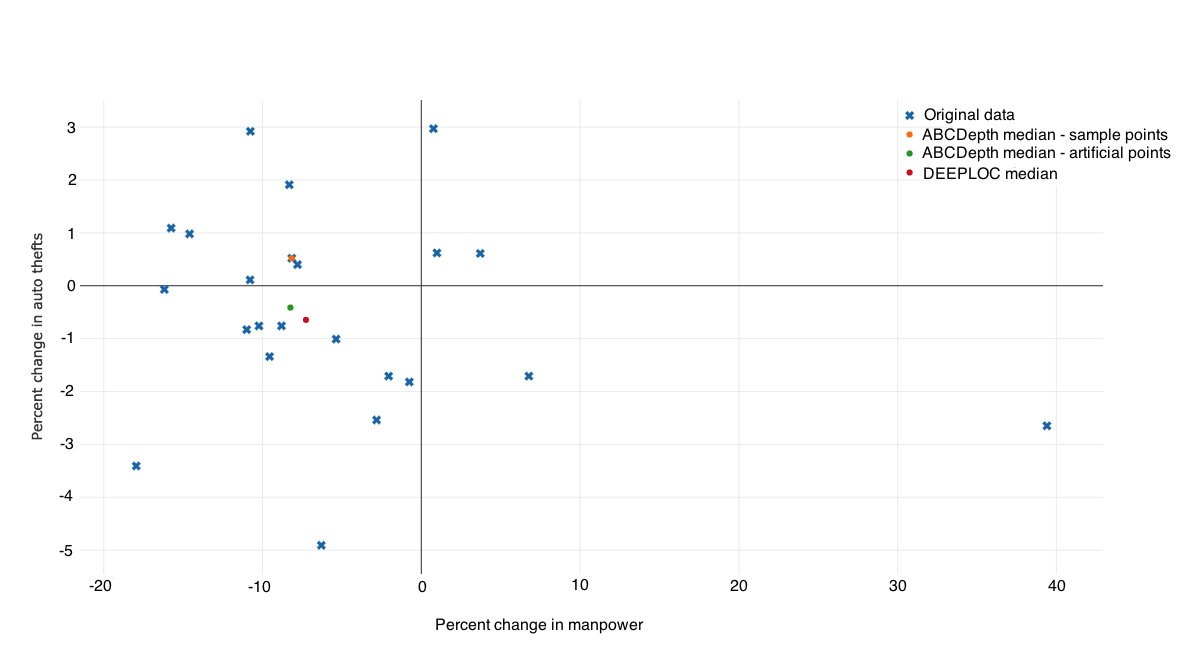}
    \caption{NY crime data set, comparison of Tukey medians using ABCDepth and DEEPLOC.}
    \label{single_ny}
    %\endminipage\hfill
\end{figure}

Another two examples are chosen from \cite{rouru96b}. Figure \ref{artificial_animals} shows $27$ two-dimensional observations
that represent animals brain weight (in g) and the body weight (in kg) taken from \cite{animals}. In order to represent the same
data values, we plotted the logarithms of those measurements as they did in \cite{rouru96b}.

 \begin{figure}[!htb]
    \centering
 %\minipage{0.7\textwidth}
    \includegraphics[width=0.7\textwidth]{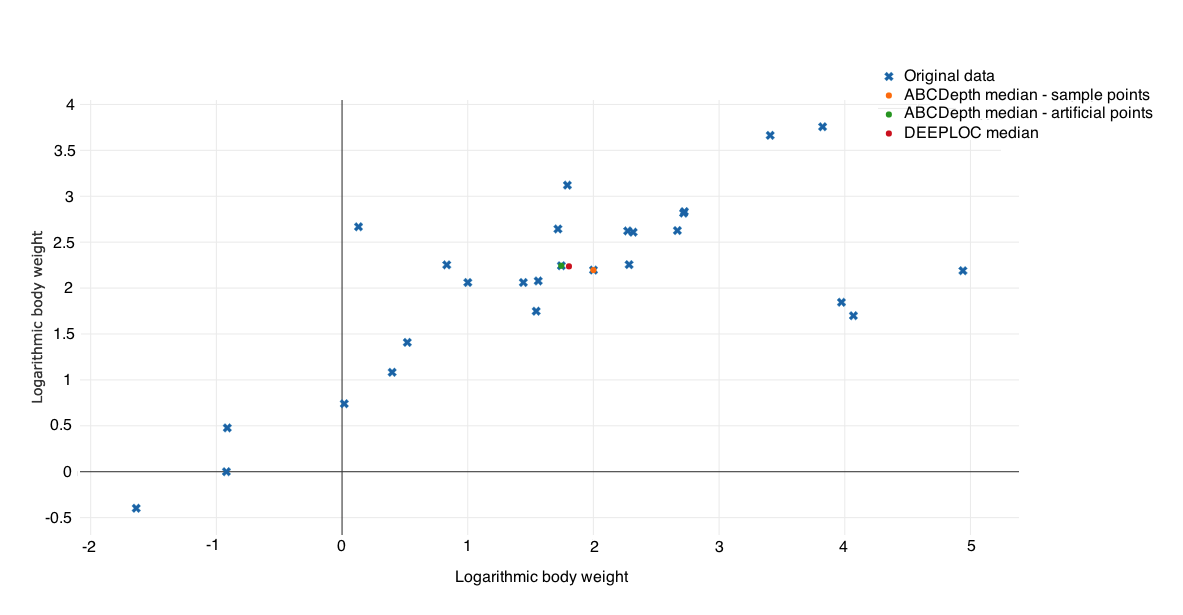}
    \caption{Animals data set, comparison of Tukey medians using ABCDepth and DEEPLOC.}
    \label{artificial_animals}
    %\endminipage\hfill
\end{figure}

Figure \ref{artificial_aircraft} considers the weight and the cost of $23$ single-engine aircraft built between $1947-1979$.
This data set is taken from \cite{aircraft}.

 \begin{figure}[!htb]
    \centering
 %\minipage{0.7\textwidth}
    \includegraphics[width=0.7\textwidth]{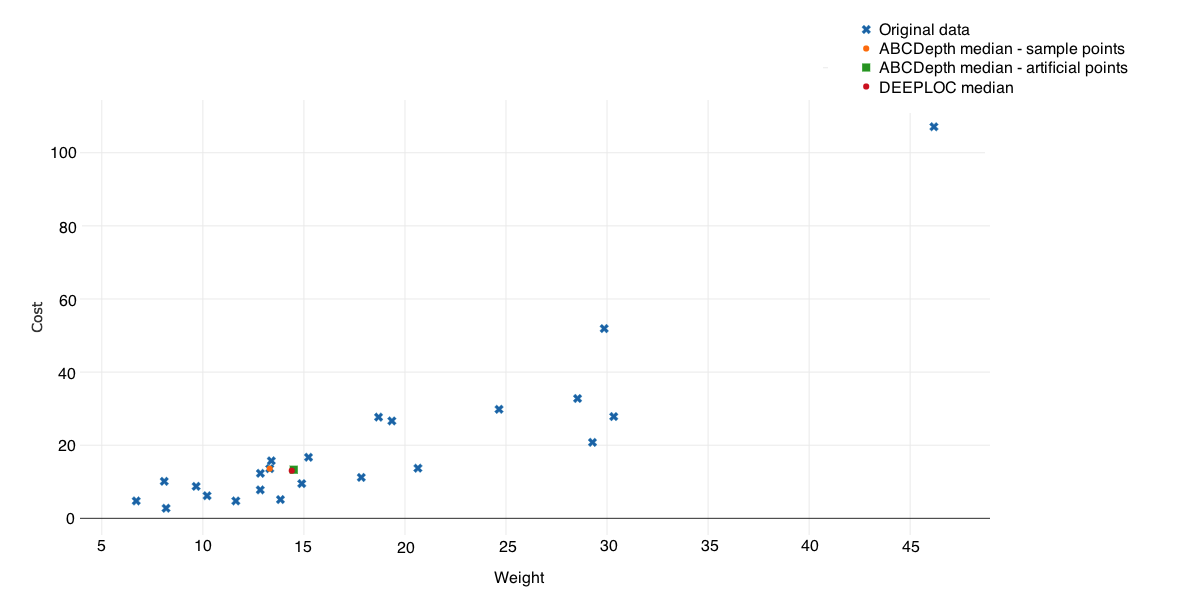}
    \caption{Aircraft data set, comparison of Tukey medians using ABCDepth and DEEPLOC.}
    \label{artificial_aircraft}
    %\endminipage\hfill
\end{figure}

As  in Figure \ref{single_ny}, in those two figures the orange point is the median obtained by running ABCDepth algorithm
using only sample data. Green and red points represent outputs of ABCDepth algorithm applied by adding $1000$ artificial data points from the uniform
 distribution and DEEPLOC median, respectively. These two examples show the importance of out-of-sample points in finding the depth levels and
 Tukey's median.

\subsection{Adapted Implementation: finding the  Tukey's depth of a sample point and out-of-sample point}
\label{subsectd}

Let us recall that by  Corollary \ref{deviale}, a point $\boldsymbol x$ has depth $h$ if and only if $\boldsymbol x \in S_{\alpha}$ for $\alpha \leq h$
and $\boldsymbol x \not \in S_{\alpha} $ for $\alpha > h$. With a sample of size $n$, we can consider only $\alpha =\frac{k}{n}$, $k=1,\ldots, n$, because
for $\frac{k-1}{n} <\alpha< \frac{k}{n}$, we have that $D(\boldsymbol x)\geq \alpha \iff D(\boldsymbol x)\geq \frac{k}{n}$. Therefore, the statement of Corollary
\ref{deviale} adapted to the sample distribution can be formulated as (using the fact that $S_{\beta} \subset S_{\alpha}$ for $\alpha <\beta$):
\begin{equation}
\label{deviales}
D(\boldsymbol x)= \frac{k}{n} \iff \boldsymbol x \in S_{\frac{k}{n}} \quad {\rm and} \quad \boldsymbol x \not\in S_{\frac{k+1}{n}} .
\end{equation}

From (\ref{deviales}) we derive the algorithm for Tukey's depth of a sample point $\bf x$ as follows. Let $\alpha_k = \frac{k}{n}$. The level set
 $S_{\alpha_1}$
contains all points in the sample. Then we construct $S_{\alpha_2}$ as an intersection of $n$ balls that contain $n-1$ sample points.
If $\boldsymbol x \not \in S_{\alpha_2}$, we conclude that $D(\boldsymbol x)=1/n$, and stop. Otherwise, we iterate this procedure till we get the situation as in
right side of (\ref{deviales}), when we conclude that the depth is $\frac{k}{n}$. The output of the algorithm is $k$.

\begin{rem}{\rm As in Remark \ref{kmaxgp}, it can be shown that the approximate depth $k/n$ is never greater than the true depth.

}
\end{rem}

Implementation-wise, in order to improve the algorithm complexity, we do not need to construct the level sets.
It is enough to count balls that contain point $\boldsymbol x$. The algorithm stops when for some $k$ there
exists at least one ball (among the candidates for the intersection)  that does not contain $\boldsymbol x$. Thus, the depth of the point $\boldsymbol x$ is $k-1$.

With a very small modification, the same algorithm can be applied to a point $\boldsymbol x$ out of the sample. We can just treat $\boldsymbol x$ as an artificial point,
in the same way as in previous sections. That is, the size of the required balls has to be $n-k+1$ points from the sample, not counting $\boldsymbol x$. The rest of
the algorithm is the same as in the case of a sample point $\boldsymbol x$.

In both versions (sample or out-of-sample) we can use additional artificial points to increase the precision. The sample version of the algorithm is
detailed below.

\begin{algorithm}[H]
  \LinesNumbered
  \SetAlgoLined
  \KwData{Original data, $X_n = (\boldsymbol{x_1}, \boldsymbol{x_1},...,\boldsymbol{x_n}) \in \mathbb{R}^{d \times n}$, $\boldsymbol x=\boldsymbol x_i$ for a fixed $i$ - the data point whose depth is calculated.}
  \KwResult{Tukey depth at $\boldsymbol x$.}
  \BlankLine
   \tcc{Iteration Phase}
   $S_{\alpha_1} = \{\boldsymbol x_1,\ldots,\boldsymbol x_n\}$\;
   \For {$k \gets 2$ \textbf{to} $n$} {

  	$p=0$ - Number of balls that contain $\boldsymbol x$. Its initial value is $0$ \;
    	\tcc{Find balls that contain point $\boldsymbol x$}
   	\For {$i \gets 1$ \textbf{to} $n$} {
   		\If {$\boldsymbol x \in B_i$, where $B_i$ contains $n - k + 1$ original data points} {
			$p=p+1$;
		}
   	}
	\If {$p \neq n$} {
		\Return $k-1$	
	}

   }
  \caption{Calculating Tukey depth of a sample point.}
\end{algorithm}

\subsubsection{Complexity}
\begin{thm}
\label{complexitytd}
Adapted ABCDepth algorithm for finding approximate Tukey depth of a sample point has order of $O(dn^2 + n^2\log{n})$ time complexity.
\end{thm}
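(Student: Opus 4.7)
The plan is to mirror the structure of the proof of Theorem \ref{complexitytm}, since Algorithm 2 shares the same preprocessing steps as Algorithm 1 and differs only in the iteration phase. Although the pseudocode for Algorithm 2 shows explicitly only the iteration phase, the balls $B_i$ that appear in line $5$ must first be constructed from the data, so I will first account for that preprocessing and then analyze the iteration phase separately.

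First I would observe that in order to have the balls $B_i$ available during the iteration phase, the algorithm has to compute the pairwise Euclidean distances between the $n$ sample points and then sort, for each point, the distances to the other points. These are exactly the steps (1)--(6) and (7)--(10) of Algorithm 1, and as established in the proof of Theorem \ref{complexitytm}, they contribute $O(dn^2)$ and $O(n^2\log n)$, respectively.

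Next I would analyze the iteration phase (lines 1--10 of Algorithm 2). The outer \textit{for} loop on $k$ runs at most $n-1$ times; by the same reasoning as in Remark \ref{kmaxgp}, under the general-position assumption the actual number of iterations is at most $\lceil n/2 \rceil$, but for the worst-case upper bound $O(n)$ suffices. For each fixed $k$, the inner \textit{for} loop iterates over all $n$ balls and performs a single test of the form $\boldsymbol x \in B_i$. Because each ball $B_i$ is stored as the hash-based set of its $n-k+1$ nearest points from the precomputed sorted lists, the membership test is $O(1)$, as in the hash-set analysis cited in the proof of Theorem \ref{complexitytm}. The counter update and the final comparison $p \neq n$ are likewise $O(1)$. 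Thus one pass of the outer loop costs $O(n)$ and the entire iteration phase costs $O(n^2)$.

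Summing the three contributions gives
\begin{equation*}
O(dn^2) + O(n^2\log n) + O(n^2) \;\sim\; O(dn^2 + n^2\log n),
\end{equation*}
which is the claimed bound. The main subtlety I expect is justifying that the ball-membership test is $O(1)$ rather than $O(n)$, as would be the case for a full intersection: this is exactly the reason why the $kn^2$ term present in Theorem \ref{complexitytm} disappears here, and it is worth emphasizing that the sorted distance lists produced by the preprocessing phase let us look up in constant time whether $\boldsymbol x$ belongs to the list of the $n-k+1$ nearest neighbours of any given centre.
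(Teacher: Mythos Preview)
Your proof is correct and follows essentially the same approach as the paper's own argument: both rely on the preprocessing analysis already carried out for Algorithm~1 and then bound the iteration phase separately. The only cosmetic difference is that the paper states the iteration cost as $O(kn)$, where $k$ is the level at which the loop terminates, whereas you use the worst-case bound $O(n^2)$; since either term is dominated by $O(dn^2 + n^2\log n)$, the conclusion is identical.
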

\begin{proof}
Balls construction for Algorithm 2 is the same as in Algorithm 1 (lines 1-10), so by Theorem \ref{complexitytm}
this part runs in  $O(dn^2 + n^2 \log n)$ time.
For the point with the depth $\alpha_k$ algorithm enters in iteration loop $k$ times and it iterates through all $n$ points to
find the balls that contain $\boldsymbol x$ point, so the whole iteration phase runs in $O(kn)$ time.

Overall complexity of the  Algorithm 2 is:
\begin{equation}
\label{alg2}
O(dn^2 + n^2 \log n) + O(kn) \sim O(dn^2 + n^2 \log n).
\end{equation}
\end{proof}

\begin{rem}  When the input data set is sparse or when the sample set is small, we add artificial data to the original data set in order to
improve the algorithm accuracy. In that case, $n$ in (\ref{alg2}) should be replaced with $N$.
\end{rem}

\subsubsection{Examples}

To illustrate the output for the Algorithm 2, we use the same real data sets as we used for Algorithm 1.
For all data sets we applied Algorithm 2 in two runs; first time with sample points only and second time with additional
 $1000$ artificial points generated from uniform distribution. Points depths are verified using \textit{depth} function from
 \cite{struyf98} implemented in \cite{depthr}. For each data set we calculate the accuracy as $\frac{100k}{n}\%$, where $n$ is the sample size and
 $k$ is the number of points that has the correct depth compared with algorithm presented in \cite{struyf98}.

On Figure \ref{ny_depths_sample} we showed NY crime points depths with accuracy of $26\%$, but if we add more points to
the original data set as we showed on Figure \ref{ny_depths_artificial}, the accuracy is greatly improved to $87\%$.

\begin{figure}
\centering
%\begin{minipage}{0.45\textwidth}
\centering
\includegraphics[width=0.7\textwidth]{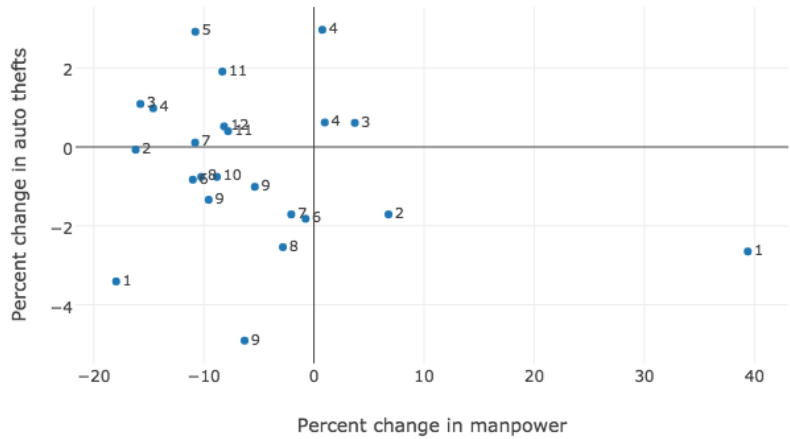}
\caption{NY crime data - point depths using only original data.}
%\end{minipage}\hfill
\label{ny_depths_sample}
\end{figure}
\begin{figure}
%\begin{minipage}{0.5\textwidth}
\centering
\includegraphics[width=0.7\textwidth]{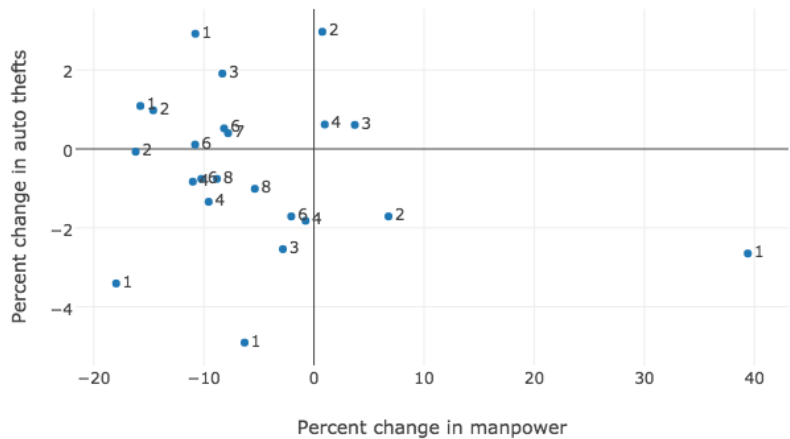}
\caption{NY crime data - point depths using original  and artificial data.}
\label{ny_depths_artificial}
%\end{minipage}
\end{figure}

Figure \ref{animals_depth_sample} shows the same accuracy
of $26\%$ for animals data set, in the case when Algorithm 2 is run with sample points only.
 By adding more points as in Figure \ref{animals_depth_artificial}, the accuracy is improved to $92\%$.

\begin{figure}
\centering
%\begin{minipage}{0.45\textwidth}
\centering
\includegraphics[width=0.7\textwidth]{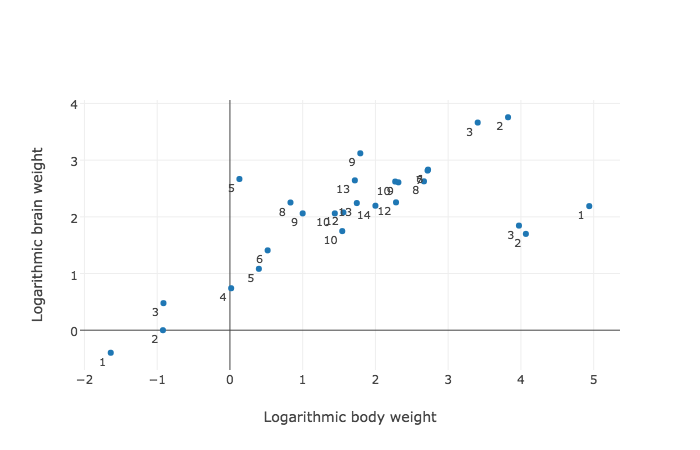}
\caption{Animals data - point depths using only original data.}
\label{animals_depth_sample}
%\end{minipage}\hfill
\end{figure}
\begin{figure}
%\begin{minipage}{0.5\textwidth}
\centering
\includegraphics[width=0.7\textwidth]{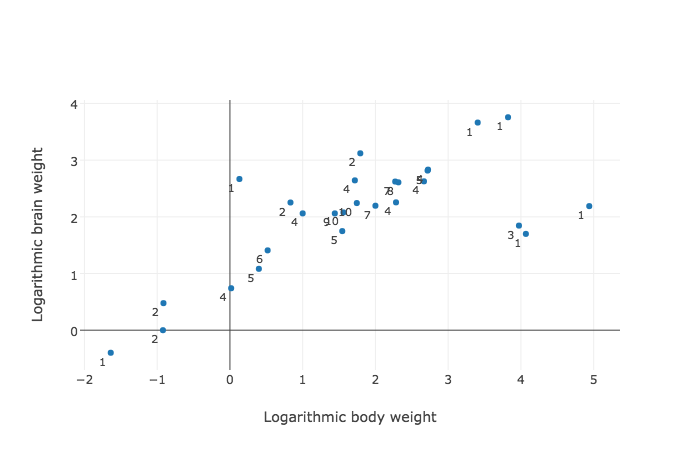}
\caption{Animals data -  point depths using  original and artificial data.}
\label{animals_depth_artificial}
%\end{minipage}
\end{figure}

The third example is aircraft data set presented on Figure \ref{aircraft_depth_sample} and Figure \ref{aircraft_depth_artificial}.
The accuracy with artificial points is $95\%$, otherwise it is  $18\%$.

\begin{figure}
\centering
%\begin{minipage}{0.45\textwidth}
\centering
\includegraphics[width=0.7\textwidth]{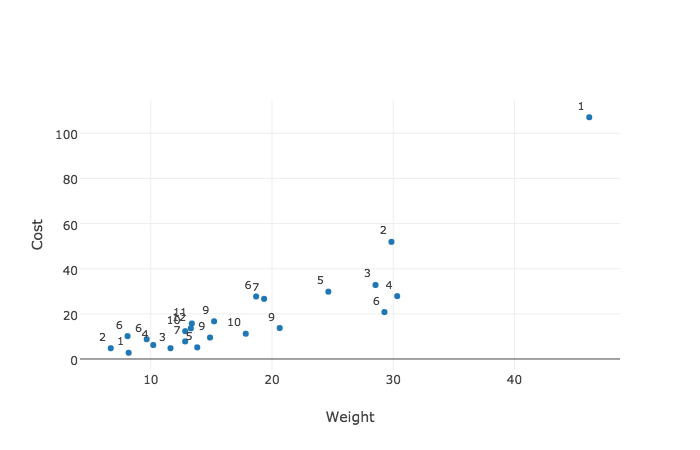}
\caption{Aircraft data - point depths using only original data.}
\label{aircraft_depth_sample}
%\end{minipage}\hfill
\end{figure}
\begin{figure}
%\begin{minipage}{0.5\textwidth}
\centering
\includegraphics[width=0.7\textwidth]{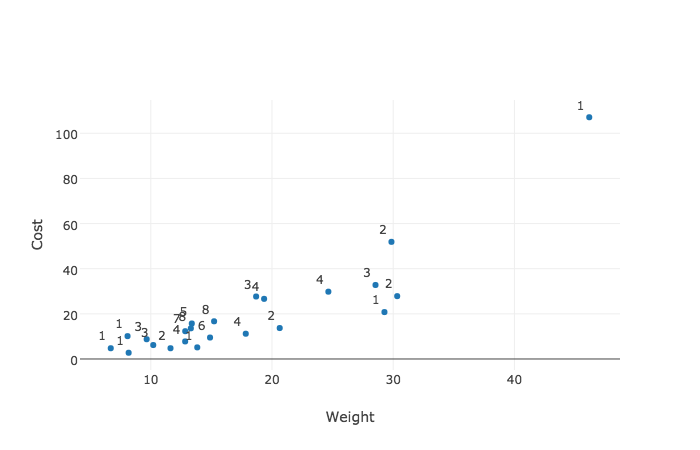}
\caption{Aircraft data - point depths using original  and artificial data.}
\label{aircraft_depth_artificial}
%\end{minipage}
\end{figure}

As the last example of this section, we would like to calculate depths of the points plotted on Figure
\ref {single_ny} using ABCDepth Algorithm 2.  On Figure \ref {single_ny} we plotted Tukey median for NY crime data
set using Algorithm 1 with artificial data points (green point) and compared the result with the median obtained by
DEEPLOC (red point). Both points are out of the sample. On Figure \ref{ny_medians_depths} we show depths of all sample points
including the depths of two median points all attained by ABCDepth Algorithm 2. Algorithm presented in \cite{struyf98} and ABCDepth Algorithm 2 calculate the same depth value for both median points.

\begin{figure}
%\begin{minipage}{0.5\textwidth}
\centering
\includegraphics[width=0.7\textwidth]{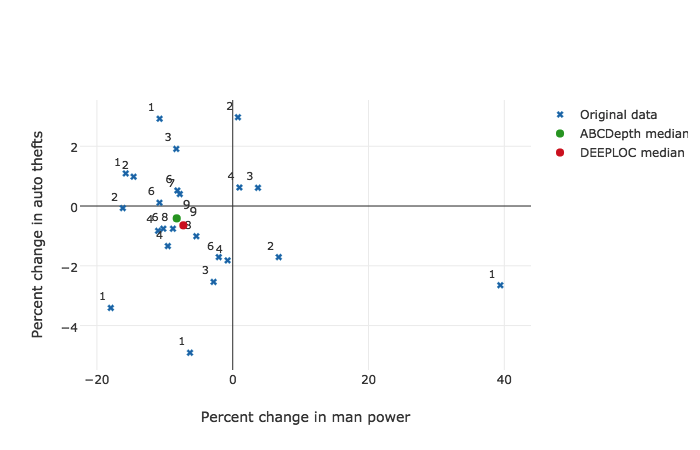}
\caption{Aircraft data - point depths using  original  and artificial data.}
\label{ny_medians_depths}
%\end{minipage}
\end{figure}

%\rule{20em}{.1em}

\subsection{Adapted Implementation: finding depth contours}
\label{subsecdc}
Based on level sets $S = \{S_{\alpha_1}, S_{\alpha_2},...,S_{\alpha_m} \}$, one can obtain data depth contours applying \textit{QuickHull} algorithm implemented in \cite{quickhull96} on each level set.

In this purpose, to the original data set, $X_n=\{\boldsymbol{x_1}, \boldsymbol{x_2},...,\boldsymbol{x_n}\}$, we add artificial data points generated from the uniform distribution, $X'=\{\boldsymbol{x_{n_1}},...,\boldsymbol{x_N}\}$, so we denote the input data set as $X_n=\{\boldsymbol{x_1}, \boldsymbol{x_2},...,\boldsymbol{x_N}\}$.

As in Algorithm 2, level set $S_{\alpha_k}$ and consequently its depth contour $D_{\alpha_k}$ contain points with depth $\frac{k}{n}$, where $k=1,...,m$ and $S_{\alpha_m}$ is the deepest level set, i.e. $D_{\alpha_m}$ is corresponding deepest contour. Thus, in $k$th iteration each ball contains $n - k + 1$ points from the original data set, although the algorithm constructs and intersects all $N$ balls. This algorithm differs from Algorithm 1 and Algorithm 2 only in iteration phase. In addition, in lines 15-18 of the Algorithm 3, convex hulls for each level set can be calculated using \textit{QuickHull} algorithm \cite{quickhull96}.

\begin{rem}
{\rm
From Remark \ref{kmaxgp} it follows that the true number of contours is never smaller than the one produced by our algorithm.
}
\end{rem}

\begin{algorithm}[H]
  \LinesNumbered
  \SetAlgoLined
  \KwData{$X_N = (\boldsymbol{x_1}, \boldsymbol{x_2},...,\boldsymbol{x_n},\boldsymbol{x_{n+1}},...,\boldsymbol{x_N}) \in \mathbb{R}^{d \times N}$, where $x_i$ belongs to the original data set for $1 \leq i \leq n$ and $x_i$ belongs to the artificial data set for $n + 1 \leq i \leq N$

  $l$ - criterion for adding artificial data

  $p$ - number of artificial data points to add
  }
  \KwResult{List of level sets, $S = \{S_{\alpha_1}, S_{\alpha_2},...,S_{\alpha_m} \}$, set of depth contours $D=\{D_{\alpha_1}, D_{\alpha_2},...,D_{\alpha_m}\}$ }
  \BlankLine
   \tcc{Iteration Phase}
   $size = N$, $k=1$
   \While{$size > 1$} {
   	$S_{\alpha_k} = \{ \bigcap_{j}^{N} B_j,  \left | B_j \right |_n = n - k + 1$ - the size w.r.t. original points only \} \;
	$size = \left | S_{\alpha_k} \right |$ \;
	\If {$size < l$} {
		Generate $p$ artificial data points located in the region of $S_{\alpha_k}$ \;
		Add $p$ data points to the input set $X_N$ \;
		$N = N + p$ \;
		Repeat lines 1-6 from Algorithm 1 for new $N$\;
		Repeat lines 7-10 from Algorithm 1 for new $N$ \;
		$S_{\alpha_k} = \{ \bigcap_{j}^{N} B_j,  \left | B_j \right | = n - k + 1$ - the size w.r.t. original points only \} \;
	}
	$k = k + 1$ \;
	Add $S_{\alpha_k}$ to $S$ \;
   }
   \BlankLine
   \For {$i \gets 1$ \textbf{to} $m$} {
   	Calculate convex hull, $D_{\alpha_i}$, from level set $S_{\alpha_i}$ \;
	Add $D_{\alpha_i}$ to $D$ \;
   }
  \caption{Calculating depth contours.}
\end{algorithm}

Whenever the level set $S_{\alpha_k}$ contains less than $l$ data points, the algorithm adds $p$ artificial points to the input data set at the region of $S_{\alpha_k}$ in such a way that $S_{\alpha_k}$ is located centrally with respect to the additional artificial data. After that, the algorithm repeats the whole procedure for constructing balls for the augmented data set. The user can define the values of $l$ and $p$.

\subsubsection{Complexity discussion }
\label{complexitydc}

Level sets calculation has complexity which is linear in $d$ in all three ABCDepth algorithms.
That is the consequence of the fact that the number of dimensions plays a role only in Euclidian inter-distances
calculation (see lines 1-6 of the Algorithm 1).
Based on level sets produced in lines 1-15 of Algorithm 3, for each level set $S_{\alpha_k}$, where $k=1,...,m$,
the algorithm calculates corresponding depth contour $D_{\alpha_k}$ using \textit{QuickHull} algorithm \cite{quickhull96}.
According to \textit{QuickHull} algorithm, in $d \leq 3$ it runs in $O(n \log r)$ time, where $n$ is the number of input set points,
$r$ is the number of processed points and it is proportional to the number of vertices in the output.
Hence, the complexity ABCDepth algorithm for constructing depth contours in $d \leq 3$ is linear in $d$.
For $d \geq 4$ the complexity of \textit{QuickHull} grows exponentially with the number of dimensions
we will not cover that case in this paper.

 %\begin{figure}[!htb]
 %   \centering
 %\minipage{0.7\textwidth}
 %   \includegraphics[width=0.7\textwidth]{nycrime_comparisons.png}
%    \caption{\izm {Tukey median of NY crime data magnified on the deepest region, comparison of Tukey medians using Algorithm 2, ISODEPTH and DEEPLOC}}
%    \label{nycrime_comparisons}
    %\endminipage\hfill
%\end{figure}

%\begin{rem}  {\rm On Figure \ref{nycrime_comparisons} we show NY crime data set and plot level set $S_{\alpha_{10}}$ (orange points) as the deepest location obtained by the Algorithm 3. Green convex hull on the same figure is the deepest depth region attained by ISODEPTH algorithm \cite{rouru96b} by running \textit{isodepth} function implemented in R "depth" package \cite{depthr}. Red point is the approximate deepest location calculated using DEEPLOC algorithm \cite{struro00} by running their Fortran code. Since we add more points to the original data set located at the region $S_{\alpha_k}$ whenever $S_{\alpha_k}$ contains less than $l$ points, the median level set contains more than one point with the same depth which we proved by running \textit{depth} function from \cite{depthr}, i.e. the median is not a singleton.  Also, on Figure \ref{nycrime_comparisons}, we show that Algorithm 3 has the higher accuracy than Algorithm 1 (see Figure \ref{single_ny}), although it takes more time to execute. }
%\end{rem}

\subsubsection{Examples}

As a demonstration of Algorithm 3, we consider real data sets: NY crime data,
data from \cite{animals} and \cite{aircraft}. For each data set we run \textit{isodepth} function based on
ISODEPTH algorithm \cite{rouru96b} implemented in \cite{depthr} and compare outputs.

Figure \ref{ny_thier_contours} and Figure \ref{ny_our_contours} present contours obtained by
ISODEPTH algorithm and Algorithm 3, respectively. One can note that both figures has the same number of contours,
i.e. the maximal depth is $\frac{10}{23}$, although DEEPLOC yields $\frac{9}{23}$ as the approximate maximal depth.
Contoure $D_2$ on Figure \ref{ny_our_contours} contains a point $(0.76, 2.97)$ which obviously doesn't belong to the depth
contour $D_2$ since its depth is $\frac{1}{23}$.

\begin{figure}
\centering
\begin{minipage}{0.45\textwidth}
\centering
\includegraphics[width=0.9\textwidth]{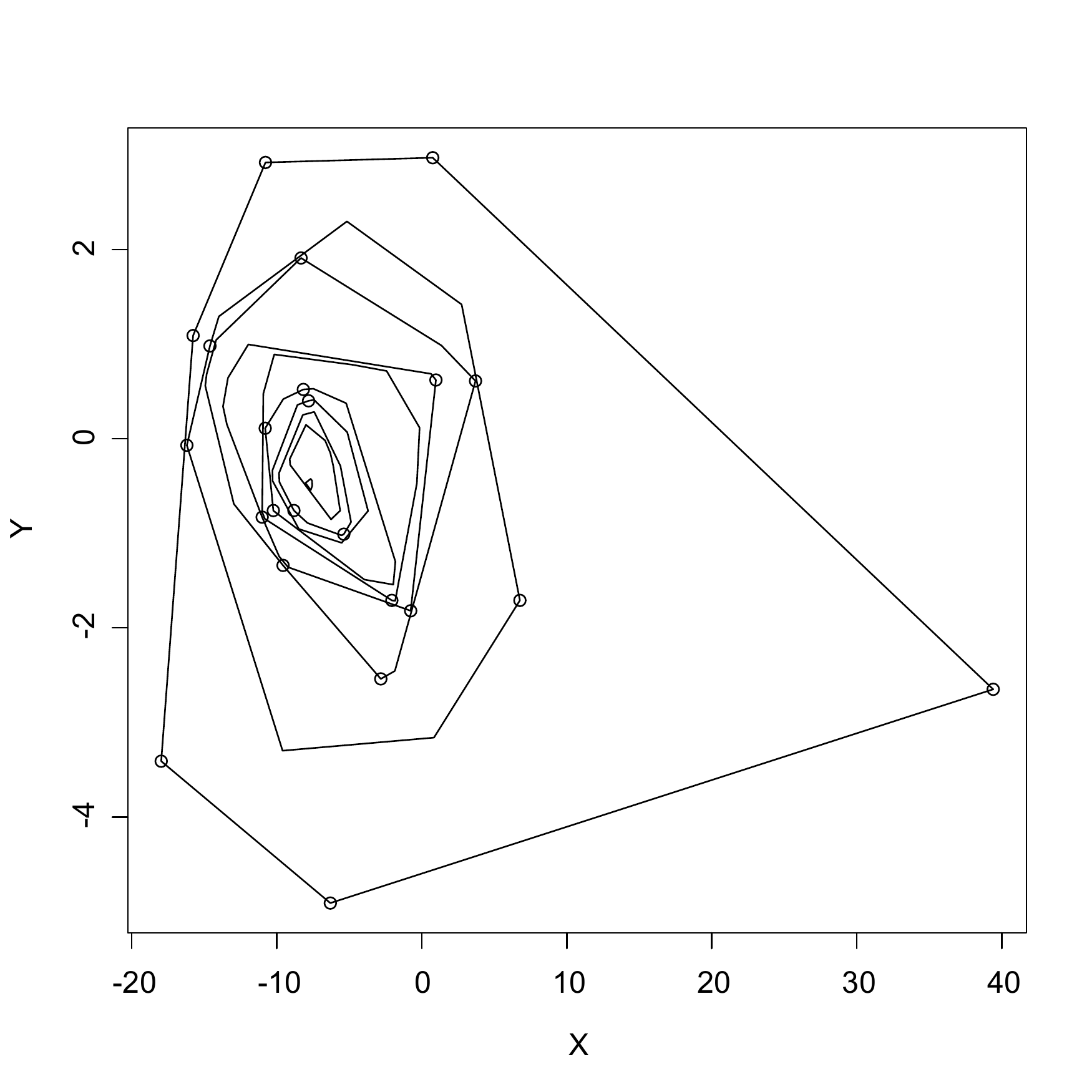}
\caption{NY crime data - ISODEPTH contours.}
\label{ny_thier_contours}
\end{minipage}\hfill
\begin{minipage}{0.5\textwidth}
\centering
\includegraphics[width=1.0\textwidth]{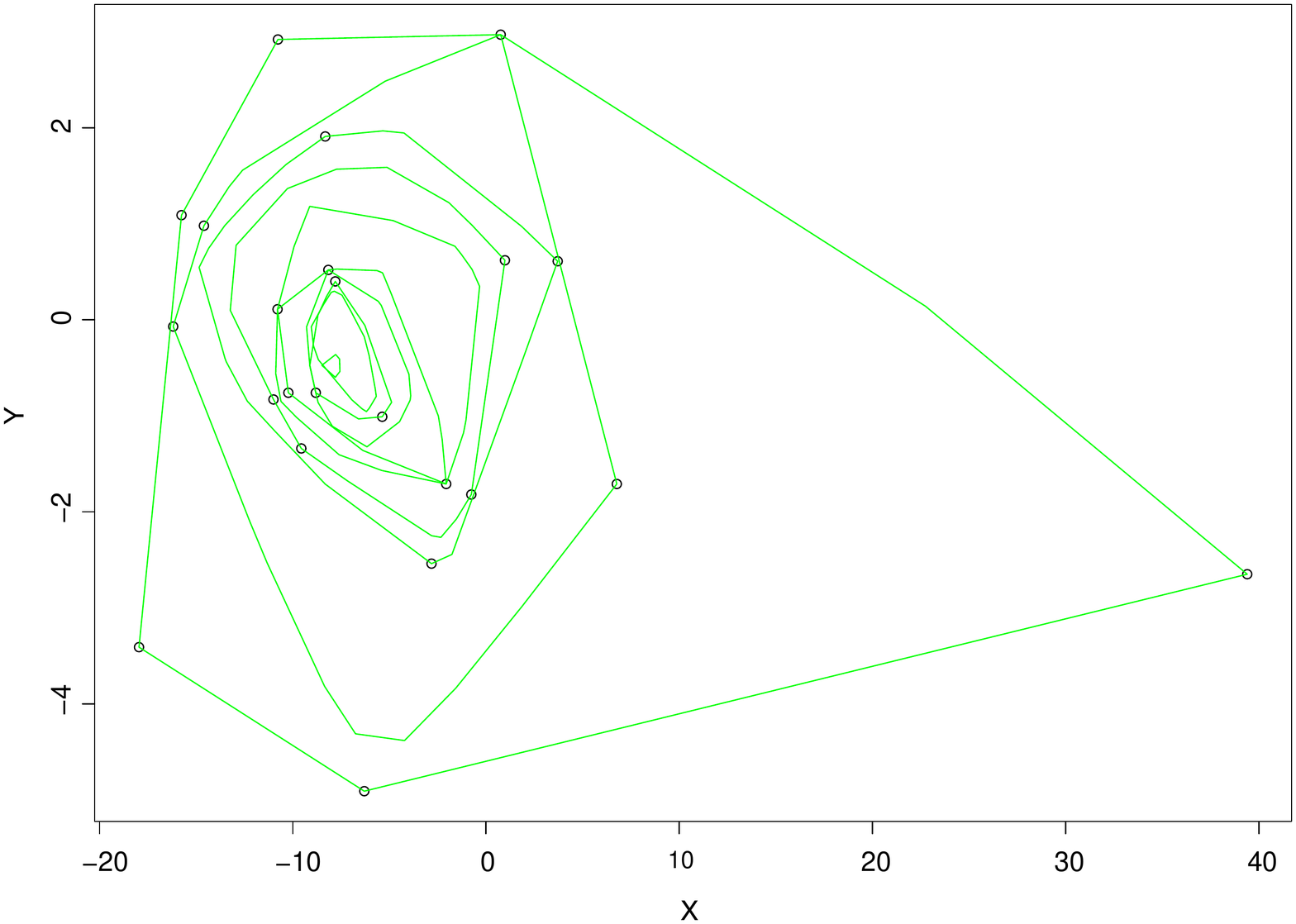}
\caption{NY crime data - ABCDepth contours.}
\label{ny_our_contours}
\end{minipage}
\end{figure}

Contours for animals data set are shown for both algorithms on Figure \ref{animals_thier_contours} and
Figure \ref{animals_our_contours}. ISODEPTH algorithm produces $11$ contours, although Figure 7 in \cite{rouru96b}
has $10$ contours. Algorithm 3 finds $10$ contours as well, i.e. the maximal depth is $\frac{10}{27}$. Aproximate maximal depth
calculated by DEEPLOC algorithm is $\frac{12}{27}$.

\begin{figure}
\centering
\begin{minipage}{0.45\textwidth}
\centering
\includegraphics[width=0.9\textwidth]{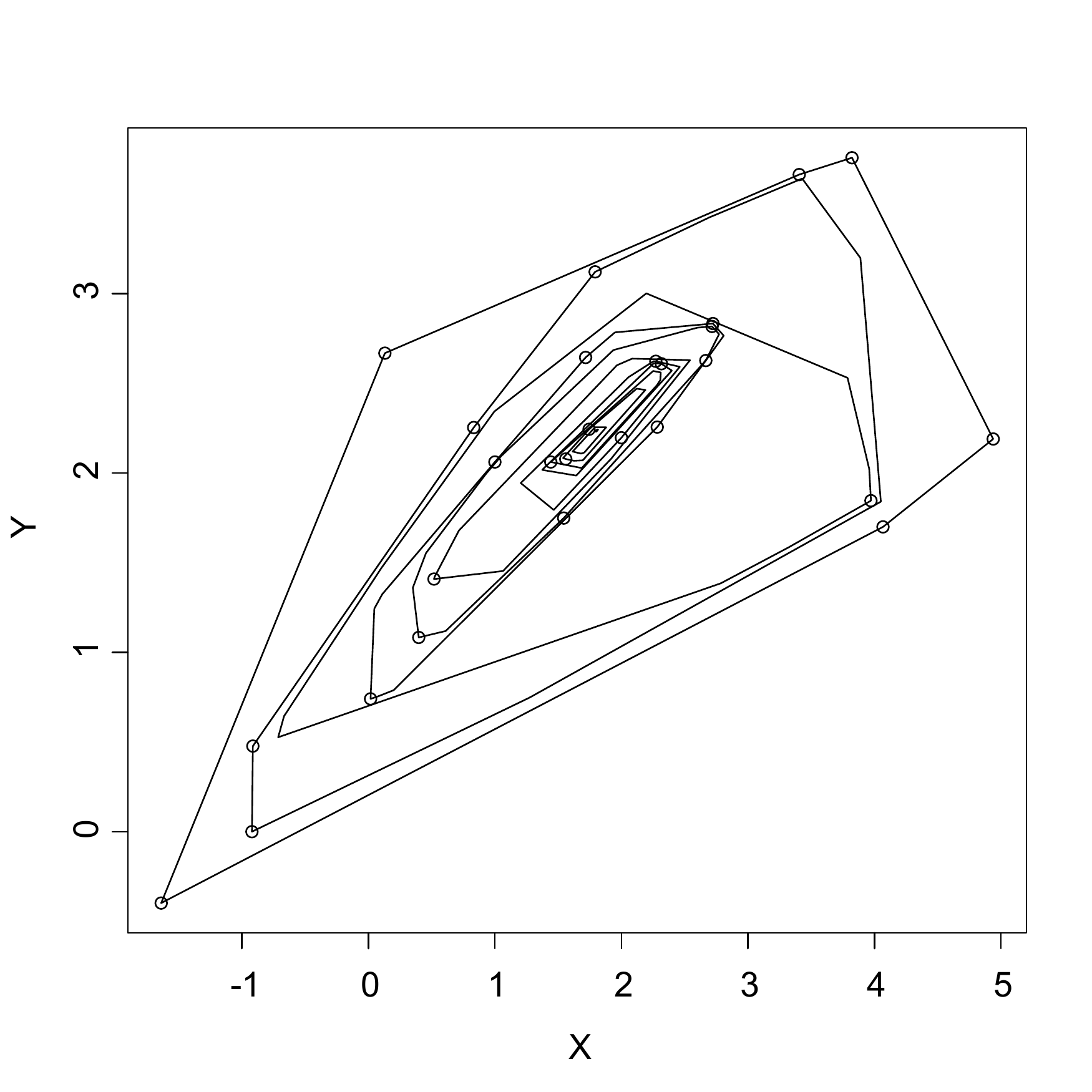}
\caption{Animals data - ISODEPTH contours.}
\label{animals_thier_contours}
\end{minipage}\hfill
\begin{minipage}{0.5\textwidth}
\centering
\includegraphics[width=1.0\textwidth]{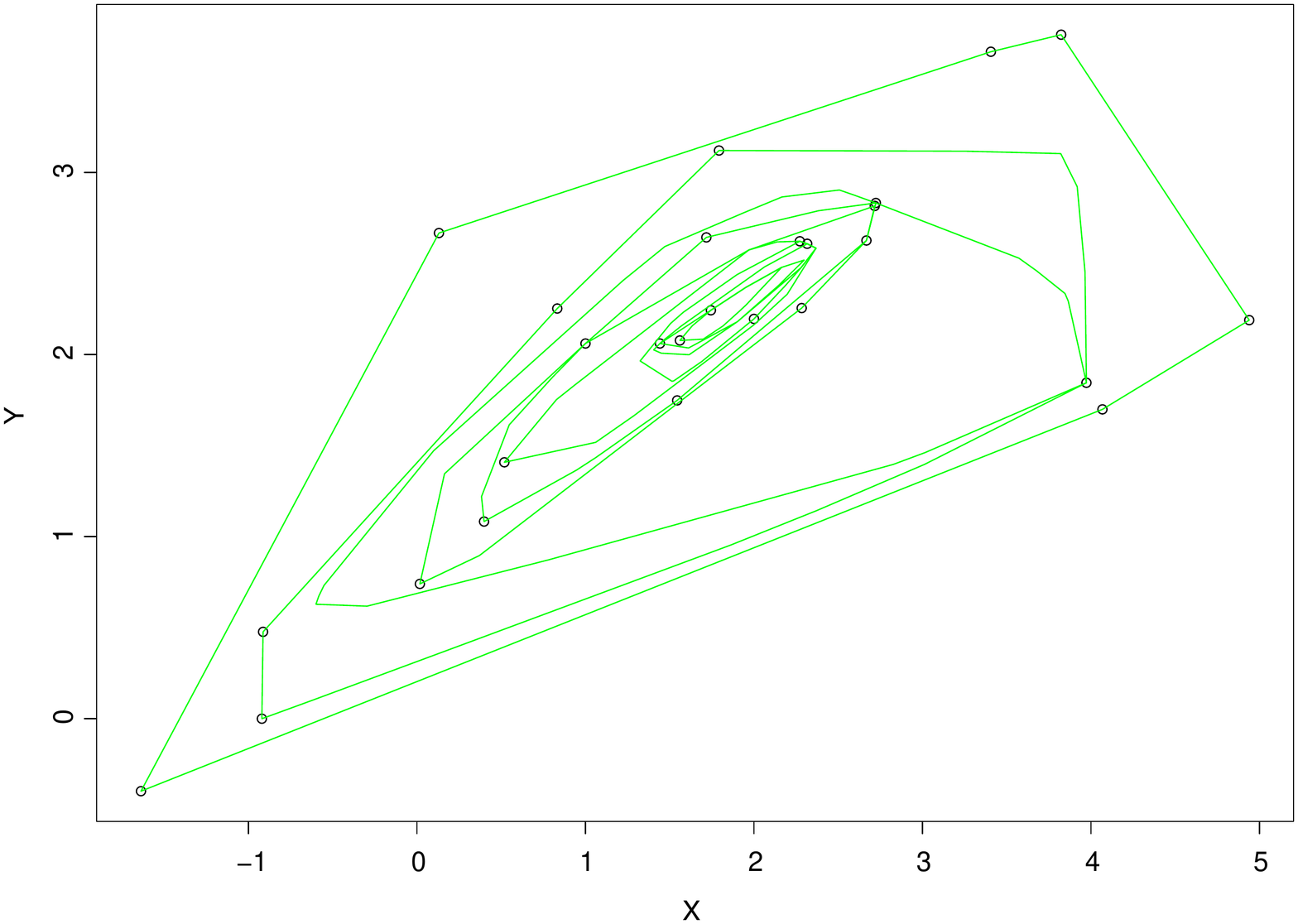}
\caption{Animals data - ABCDepth contours.}
\label{animals_our_contours}
\end{minipage}
\end{figure}

Figure \ref{aircraft_thier_contours} and Figure \ref{aircraft_our_contours} present
the depth contours for aircraft data set for both algorithms, ISODEPTH and Algorithm 3.
Each plot contains $10$ contours, i.e. the maximal depth is $\frac{10}{23}$ and DEEPLOC
finds the deepest point on the same depth.

\begin{figure}
\centering
\begin{minipage}{0.45\textwidth}
\centering
\includegraphics[width=0.9\textwidth]{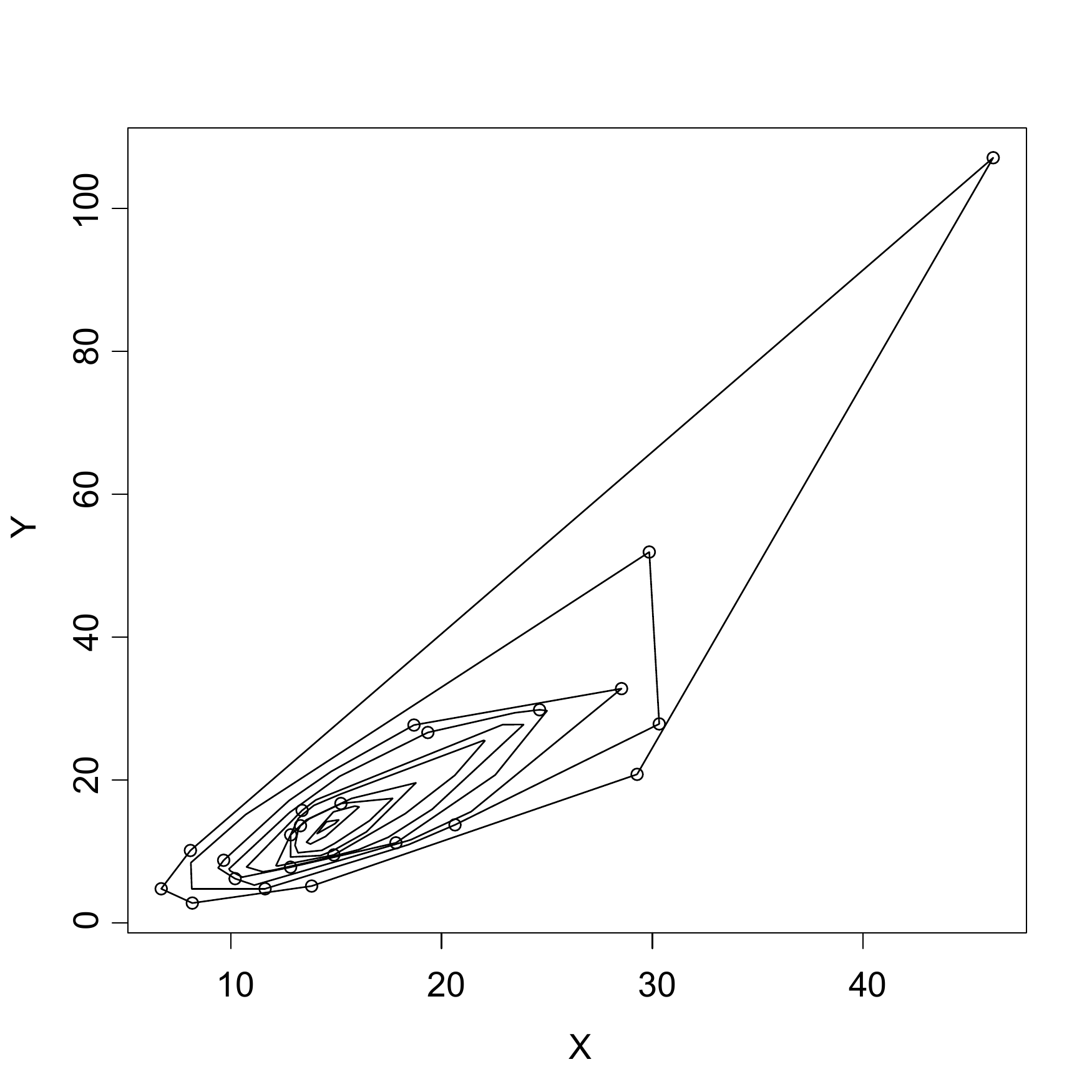}
\caption{Aircraft data - ISODEPTH contours.}
\label{aircraft_thier_contours}
\end{minipage}\hfill
\begin{minipage}{0.5\textwidth}
\centering
\includegraphics[width=1.0\textwidth]{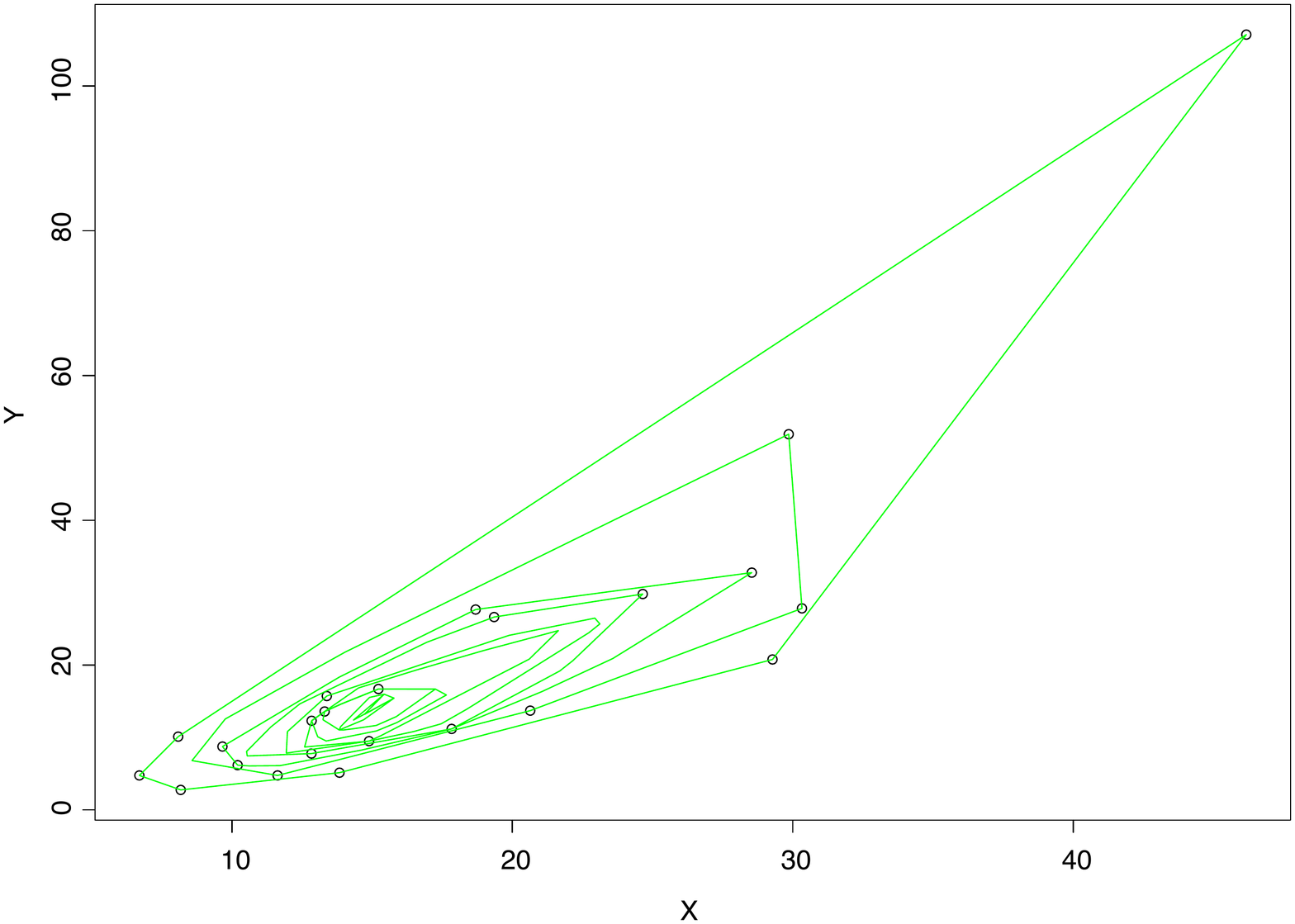}
\caption{Aircraft data - ABCDepth contours.}
\label{aircraft_our_contours}
\end{minipage}
\end{figure}

The contours produced by Algorithm 3 are similar to the contours
attained by ISODEPTH algorithm for all data sets we tested, but one should keep
in mind that Algorithm 3 contours are approximate and they result from intersections of balls,
which in real cases can not end with straight lines with finally many balls. In most cases, a depth contour
 obtained by Algorithm 3 contains the exact depth contour.

%COMPARISONS%
\section{Performance and Comparisons}
\label{secc}

According to Theorem \ref{complexitytm}, the complexity for calculating Tukey median grows linearly with dimension and in terms of a number of data points,
it grows with the order of $n^2\log n$.
Rousseeuw and Ruts in \cite{rouru98} pioneered with an exact algorithm called HALFMED for Tukey median in two dimensions
that runs in $O(n^2 log^2{n})$ time. This algorithm is better than ABCDepth for $d=2$, but it processes only bivariate data sets.
Struyf and Rousseeuw in \cite{struro00} implemented the first approximate algorithm called DEEPLOC for finding the deepest location
in higher dimensions. Its complexity is $O(kmn \log (n+kdn+md^3+mdn))$ time, where $k$ is the number of steps taken by the program and
m is the number of directions, i.e. vectors constructed by the program. This algorithm is very efficient for low-dimensional data sets,
but for high-dimensional data sets ABCDepth algorithm outperforms DEEPLOC.
Chan in \cite{chan04} presents an approximate randomized algorithm for maximum Tukey depth. It runs in $O(n^{d-1})$ time and it is
not implemented yet.

In Table 1 execution times of DEEPLOC algorithm and ABCDepth algorithm for finding Tukey median are reported.
The measurements are performed using synthetic data generated from the multivariate normal $\mathcal N(0,1)$ distribution.
In this table, we demonstrate how ABCDepth algorithm behaves with thousands of high-dimensional data points. It takes
$\sim 13$ minutes for $n=7000$ and $d=2000$. Since DEEPLOC algorithm doesn't support data sets with $d > n$ and returns the error message:
"the dimension should be at most the number of objects", we denoted those examples with $-$ sign in the table.
The sign $*$ means that the median is not computable at least once in $12$ hours.

\begin{table}[]
\centering
\caption{Compare DEEPLOC and ABCDepth execution times in seconds.}
\label{deepABC}
\begin{center}
\resizebox{0.6\textwidth}{!}{\begin{minipage}{\textwidth}
\begin{tabular}{@{}lllllllllllllll@{}}
\toprule
\multirow{2}{*}{d} & \multirow{2}{*}{Algorithm}                                 & \multicolumn{13}{c}{n}                                                                                                                                                                                                                                                                                                                                                                                                                                                                                                                                                                                                                                                                                                                                                    \\ \cmidrule(l){3-15}
                   &                                                            & 320                                                  & 640                                                     & 1280                                                 & 2560                                                 & 3000                                                 & 3500                                                   & 4000                                                  & 4500                                                   & 5000                                                  & 5500                                                   & 6000                                                   & 6500                                                   & 7000                                                                   \\ \midrule
50                 & \begin{tabular}[c]{@{}l@{}}Deeploc\\ ABCDepth\end{tabular} & \begin{tabular}[c]{@{}l@{}}4.43\\ 0.15\end{tabular}  & \begin{tabular}[c]{@{}l@{}}7.15\\ 0.63\end{tabular}     & \begin{tabular}[c]{@{}l@{}}12.65\\ 2.86\end{tabular} & \begin{tabular}[c]{@{}l@{}}23.87\\ 4.95\end{tabular} & \begin{tabular}[c]{@{}l@{}}30.93\\ 7.27\end{tabular} & \begin{tabular}[c]{@{}l@{}}31.79\\ 8.65\end{tabular}   & \begin{tabular}[c]{@{}l@{}}37.66\\ 12.51\end{tabular} & \begin{tabular}[c]{@{}l@{}}45.35\\ 14.18\end{tabular}  & \begin{tabular}[c]{@{}l@{}}50.72\\ 17.51\end{tabular} & \begin{tabular}[c]{@{}l@{}}63.13\\ 22.18\end{tabular}  & \begin{tabular}[c]{@{}l@{}}63.75\\ 25.86\end{tabular}  & \begin{tabular}[c]{@{}l@{}}84.13\\ 29.24\end{tabular}  & \begin{tabular}[c]{@{}l@{}}69.61\\ 37.34\end{tabular}                  \\ \midrule
100                & \begin{tabular}[c]{@{}l@{}}Deeploc\\ ABCDepth\end{tabular} & \begin{tabular}[c]{@{}l@{}}19.42\\ 0.22\end{tabular} & \begin{tabular}[c]{@{}l@{}}22.85\\ 0.92\end{tabular}    & \begin{tabular}[c]{@{}l@{}}33.81\\ 2.03\end{tabular} & \begin{tabular}[c]{@{}l@{}}77.45\\ 7.83\end{tabular} & \begin{tabular}[c]{@{}l@{}}69.04\\ 9.78\end{tabular} & \begin{tabular}[c]{@{}l@{}}105.56\\ 13.14\end{tabular} & \begin{tabular}[c]{@{}l@{}}97.39\\ 17.89\end{tabular} & \begin{tabular}[c]{@{}l@{}}120.05\\ 23.52\end{tabular} & \begin{tabular}[c]{@{}l@{}}140.04\\ 30.6\end{tabular} & \begin{tabular}[c]{@{}l@{}}131.85\\ 39.18\end{tabular} & \begin{tabular}[c]{@{}l@{}}127.36\\ 49.03\end{tabular} & \begin{tabular}[c]{@{}l@{}}212.42\\ 68.46\end{tabular} & \begin{tabular}[c]{@{}l@{}}183.27\\ 82.02\end{tabular}                 \\ \midrule
500                & \begin{tabular}[c]{@{}l@{}}Deeploc\\ ABCDepth\end{tabular} & \begin{tabular}[c]{@{}l@{}}-\\ 0.693\end{tabular}    & \begin{tabular}[c]{@{}l@{}}1616.53\\ 3.181\end{tabular} & \begin{tabular}[c]{@{}l@{}}*\\ 8.4\end{tabular}      & \begin{tabular}[c]{@{}l@{}}*\\ 27.9\end{tabular}     & \begin{tabular}[c]{@{}l@{}}*\\ 41.61\end{tabular}    & \begin{tabular}[c]{@{}l@{}}*\\ 53.73\end{tabular}      & \begin{tabular}[c]{@{}l@{}}*\\ 71.95\end{tabular}     & \begin{tabular}[c]{@{}l@{}}*\\ 89.36\end{tabular}      & \begin{tabular}[c]{@{}l@{}}*\\ 109.22\end{tabular}    & \begin{tabular}[c]{@{}l@{}}*\\ 140.18\end{tabular}     & \begin{tabular}[c]{@{}l@{}}*\\ 151.45\end{tabular}     & \begin{tabular}[c]{@{}l@{}}*\\ 180.5\end{tabular}      & \begin{tabular}[c]{@{}l@{}}*\\ 213.01\end{tabular}                     \\ \midrule
1000               & \begin{tabular}[c]{@{}l@{}}Deeploc\\ ABCDepth\end{tabular} & \begin{tabular}[c]{@{}l@{}}-\\ 1.165\end{tabular}    & \begin{tabular}[c]{@{}l@{}}-\\ 3.99\end{tabular}        & \begin{tabular}[c]{@{}l@{}}*\\ 14.389\end{tabular}   & \begin{tabular}[c]{@{}l@{}}*\\ 54.18\end{tabular}    & \begin{tabular}[c]{@{}l@{}}*\\ 74.38\end{tabular}    & \begin{tabular}[c]{@{}l@{}}*\\ 98.73\end{tabular}      & \begin{tabular}[c]{@{}l@{}}*\\ 129.85\end{tabular}    & \begin{tabular}[c]{@{}l@{}}*\\ 164.96\end{tabular}     & \begin{tabular}[c]{@{}l@{}}*\\ 203.37\end{tabular}    & \begin{tabular}[c]{@{}l@{}}*\\ 246.54\end{tabular}     & \begin{tabular}[c]{@{}l@{}}*\\ 286.17\end{tabular}     & \begin{tabular}[c]{@{}l@{}}*\\ 344.94\end{tabular}     & \begin{tabular}[c]{@{}l@{}}*\\ 39.16\end{tabular}                      \\ \midrule
2000               & \begin{tabular}[c]{@{}l@{}}Deeploc\\ ABCDepth\end{tabular} & \begin{tabular}[c]{@{}l@{}}-\\ 2.21\end{tabular}     & \begin{tabular}[c]{@{}l@{}}-\\ 7.86\end{tabular}        & \begin{tabular}[c]{@{}l@{}}-\\ 27.25\end{tabular}    & \begin{tabular}[c]{@{}l@{}}*\\ 107.46\end{tabular}   & \begin{tabular}[c]{@{}l@{}}*\\ 132.77\end{tabular}   & \begin{tabular}[c]{@{}l@{}}*\\ 180.02\end{tabular}     & \begin{tabular}[c]{@{}l@{}}*\\ 243.1\end{tabular}     & \begin{tabular}[c]{@{}l@{}}*\\ 297.6\end{tabular}      & \begin{tabular}[c]{@{}l@{}}*\\ 386.75\end{tabular}    & \begin{tabular}[c]{@{}l@{}}*\\ 475.87\end{tabular}     & \begin{tabular}[c]{@{}l@{}}*\\ 554.23\end{tabular}     & \begin{tabular}[c]{@{}l@{}}*\\ 666.4\end{tabular}      & \multicolumn{1}{c}{\begin{tabular}[c]{@{}c@{}}*\\ 764.74\end{tabular}} \\ \bottomrule
\end{tabular}
\end{minipage}}
\end{center}
\end{table}

ABCDepth algorithm for finding Tukey depth of a point runs in $O(dn^2 + n^2 \log n)$ as we showed in Theorem \ref{complexitytd}. Most of the algorithms for finding Tukey depth are exact and at the same time computationally expensive. One of the first exact algorithms for bivariate data sets, called LDEPTH, is proposed by Rousseeuw and Ruts in \cite{rouru96}. It has complexity of
$O(n \log n)$ and like HALFMED it outperforms ABCDepth for $d=2$.
Rousseeuw and Struyf in \cite{struyf98} implemented an exact algorithm for $d=3$ that runs in $O(n^2 \log n)$ time and an approximate algorithm for $d > 3$ that runs in $O(md^3 + mdn)$ where $m$ is the number directions, i.e. all directions perpendicular to hyperplanes through $d$ data points.
The later work of Chen et al. in \cite{chmowa13} presented approximate algorithms based on the third approximation method of Rousseeuw and Struyf in \cite{struyf98} reducing the problem from $d$ to $k$ dimensions. The first one, for $k=1$, runs in $O(\epsilon^{1-d}dn)$ time and the second one, for $k \geq 2$, runs in $O((\epsilon^{-1} c \log n)^d)$, where $\epsilon$ and $c$ are empirically chosen constants.
The another exact algorithm for finding Tukey depth in ${\bf R}^d$ is proposed by Liu and Zuo in \cite{zuo14}, which proves to be extremely time-consuming (see Table 5.1 of Section 5.3 in \cite{pavlo14}) and the algorithm involves heavy computations, but can serve as a benchmark.
Recently, Dyckerhoff and Mozharovskyi in \cite{pavlo16} proposed two exact algorithm for finding halfspace depth that run in $O(n^d)$ and $O(n^{d-1} \log n)$ time.

Table 2 shows execution times of ABCDepth algorithm for finding a depth of a sample point. Measurements are
derived from synthetics data from the multivariate standard normal distribution. Execution time for each data set represents
averaged time consumed per data point. Most of the execution time ($\sim 95\%$) is spent on balls construction (see lines 1-10 of the Algorithm 1),
while finding a point
depth itself (iteration phase of the Algorithm 2) is really fast since it runs in $O(kn)$ time.

\begin{table}[]
\centering
\caption{Aveerage time per data point.}
\label{depthABC}
\begin{center}
\resizebox{0.6\textwidth}{!}{\begin{minipage}{\textwidth}
\begin{tabular}{@{}llllllllllllll@{}}
\toprule
\multirow{2}{*}{d} & \multicolumn{13}{c}{n}                                                                                                      \\ \cmidrule(l){2-14}
                   & 320  & 640  & 1280 & 2560  & 3000  & 3500  & 4000  & 4500  & 5000   & 5500   & 6000   & 6500   & 7000                       \\ \midrule
50                 & 0.07 & 0.21 & 1.21 & 8.23  & 12.64 & 19.22 & 28.56 & 42.04 & 64.33  & 77.45  & 98.79  & 121.86 & 150.73                     \\ \midrule
100                & 0.08 & 0.25 & 1.23 & 8.18  & 13.91 & 20.48 & 28.51 & 44.31 & 65.55  & 81.91  & 99.96  & 123.84 & 154.65                     \\ \midrule
500                & 0.13 & 0.42 & 1.84 & 11.42 & 17.93 & 21.42 & 35.41 & 52.07 & 73.21  & 95.18  & 119.82 & 141.88 & 176.12                     \\ \midrule
1000               & 0.17 & 0.53 & 2.52 & 13.53 & 20.13 & 32.35 & 41.71 & 58.72 & 82.92  & 103.84 & 138.69 & 155.32 & 200.55                     \\ \midrule
2000               & 0.26 & 0.94 & 4.12 & 18.32 & 28.12 & 38.79 & 56.04 & 73.79 & 102.98 & 124.48 & 156.54 & 186.59 & \multicolumn{1}{c}{232.45} \\ \bottomrule
\end{tabular}\end{minipage}}
\end{center}
\end{table}

In Section \ref{subsecdc} we presented ABCDepth algorithm for calculating level sets and in addition it can construct depth contours using \textit{QuickHull} algorithm. Its complexity is linear in $d$ for $d \leq 3$. For $d=2$ there are two exact algorithms for constructing depth contours. The first one, called ISODEPTH, is proposed by Ruts nad Rousseeuw in \cite{rouru96b} and for $n < 1000$ the time of the proposed algorithm behaves as a multiple of $n^2 \log n$, although according to \textit{isodepth} function implemented in R "depth" package \cite{depthr} ISODEPTH takes several minutes to calculate contours from $1000$ points generated from bivariate normal distribution.
The second algorithm is presented by Miller et al. in \cite{struyf03} which computes all bivariate depth contours in $O(n^2)$ time.
For the depth contours in dimensions $d > 2$ Liu et al. proposed an algorithm in \cite{pavlo14a} that runs in $O(n^p \log n)$ time.

The ABCDepth algorithm has been implemented in Java. Tests for all algorithms are run using one kernel of Intel Core i7 (2.2 GHz) processor.

\section*{Acknowledgements}
We would like to express our gratitude to Anja Struyf and coauthors for sharing the code and the data
that were used in their papers of immense importance in the area. Answering to Yijun Zuo's doubts about the first arXiv version of this paper
and solving difficult queries that he was proposing, helped us to improve the
presentation and the algorithms. The second author acknowledges the  support by grants III 44006 and 174024
from Ministry of Education, Science and Technological Development of
Republic of Serbia.

%\nocite{*}

\bibliography{btukey}

\bibliographystyle{plain}

\end{document}